\setlist{nolistsep}
\newcommand{\hgate}{{\sf H}}
\newcommand{\pgate}{{\sf P}}
\newcommand{\rgate}{{\sf T}}
\newcommand{\tgate}{{\sf T}}
\newcommand{\xgate}{{\sf X}}
\newcommand{\ygate}{{\sf Y}}
\newcommand{\zgate}{{\sf Z}}
\newcommand{\cnot}{{\sf CNOT}}
\providecommand{\abs}[1]{\lvert#1\rvert}
\newcommand{\QPIP}{\mathsf{QPIP}}
\newcommand{\cC}{\mathcal{C}}
\newcommand{\calS}{\mathcal{S}}
\newcommand{\norm}[1]{\left\lVert#1\right\rVert}					%
\newcommand{\ketbra}[2]{\ket{#1}\bra{#2}}			%
\newcommand{\egoketbra}[1]{\ketbra{#1}{#1}}			%
\newcommand{\stepref}[2]{\hyperref[#2]{Step~\ref{#1}\ref{#2}}}
\newcommand{\bra}[1]{{\left\langle{#1}\right\vert}}
\newcommand{\ket}[1]{{\left\vert{#1}\right\rangle}}
\newcommand{\qw}[1][-1]{\ar @{-} [0,#1]}
\newcommand{\qwx}[1][-1]{\ar @{-} [#1,0]}
\newcommand{\cw}[1][-1]{\ar @{=} [0,#1]}
\newcommand{\cwx}[1][-1]{\ar @{=} [#1,0]}
\newcommand{\gate}[1]{*+<.6em>{#1} \POS ="i","i"+UR;"i"+UL **\dir{-};"i"+DL **\dir{-};"i"+DR **\dir{-};"i"+UR **\dir{-},"i" \qw}
\newcommand{\meter}{*=<1.8em,1.4em>{\xy ="j","j"-<.778em,.322em>;{"j"+<.778em,-.322em> \ellipse ur,_{}},"j"-<0em,.4em>;p+<.5em,.9em> **\dir{-},"j"+<2.2em,2.2em>*{},"j"-<2.2em,2.2em>*{} \endxy} \POS ="i","i"+UR;"i"+UL **\dir{-};"i"+DL **\dir{-};"i"+DR **\dir{-};"i"+UR **\dir{-},"i" \qw}
\newcommand{\control}{*!<0em,.025em>-=-<.2em>{\bullet}}
\newcommand{\ctrl}[1]{\control \qwx[#1] \qw}
\newcommand{\targ}{*+<.02em,.02em>{\xy ="i","i"-<.39em,0em>;"i"+<.39em,0em> **\dir{-}, "i"-<0em,.39em>;"i"+<0em,.39em> **\dir{-},"i"*\xycircle<.4em>{} \endxy} \qw}
\newcommand{\gategroup}[6]{\POS"#1,#2"."#3,#2"."#1,#4"."#3,#4"!C*+<#5>\frm{#6}}
\newcommand{\rstick}[1]{*!L!<-.5em,0em>=<0em>{#1}}
\newcommand{\lstick}[1]{*!R!<.5em,0em>=<0em>{#1}}
\newcommand{\Qcircuit}{\xymatrix @*=<0em>}
\newcommand{\dw}[1][-1]{\ar @{--} [0,#1]}
\newcommand{\cgate}[1]{*+<.6em>{#1} \POS ="i","i"+UR;"i"+UL **\dir{-};"i"+DL **\dir{-};"i"+DR **\dir{-};"i"+UR **\dir{-},"i" \cw}
\begin{document}

\begin{frontmatter}[classification=text]
\title{How to Verify a Quantum Computation}

\author[broadbent]{Anne Broadbent\thanks{This material is based upon work supported by the Air Force Office of Scientific Research under award number FA9550-17-1-0083, Canada's NSERC, and the University of Ottawa's Research Chairs program.
}}

\begin{dedication}
  To my daughter \'Emily, on the occasion of her first birthday.
\end{dedication}

\begin{abstract}
We give a new theoretical solution to a leading-edge experimental challenge, namely to the verification of quantum computations in the regime of high computational complexity.
Our results are given in the language of quantum interactive proof systems. Specifically, we show that any language in $\BQP$ has a quantum interactive proof system with a polynomial-time classical verifier (who can also prepare random single-qubit pure states), and a quantum polynomial-time prover. Here, soundness is unconditional---\ie,~it holds even for computationally unbounded provers. Compared to prior work achieving similar results, our technique does not require the encoding of the input or of the computation; instead, we rely on encryption of the input (together with a method to perform computations on encrypted inputs), and show that the random choice between three types of input (defining a \emph{computational run}, versus two types of \emph{test runs}) suffices. Because the overhead is very low for each run (it is linear in the size of the circuit), this shows that verification could be achieved at minimal cost compared to performing the computation.  As a proof technique, we use a reduction to an entanglement-based protocol; to the best of our knowledge, this is the first time this technique has been used in the context of verification of quantum computations, and it enables a relatively straightforward analysis.
\end{abstract}

\tocacm{F.1.3}
\tocams{68Q15, 81P68}
\tockeywords{complexity theory, cryptography, interactive proofs, quantum computing, quantum interactive proofs, quantum cryptography}

%

\end{frontmatter}

\section{Introduction}

Feynman~\cite{Fey82} was the first to point out that quantum computers, if built, would be able to perform quantum simulations (\ie,~to compute the predictions of quantum mechanics; which is widely believed to be classically intractable). But this immediately begs the question: if the output of a quantum computation cannot be predicted, how do we know that it is correct? Conventional wisdom would tell us that we can rely on testing \emph{parts} (or scaled-down versions) of a quantum computer---conclusive results would then extrapolate to the larger system. But this is somewhat unsatisfactory, since we may not rule out the hypothesis that, at a large scale, quantum computers behave unexpectedly. A different approach to the verification of a quantum computation would be to construct a number of quantum computers based on different technologies (\eg,~with ionic, photonic, superconducting and/or solid state systems), and to accept the computed predictions if the experimental results agree. Again, this is still somewhat unsatisfactory, as a positive outcome does not confirm the correctness of the output, but instead confirms that the various large-scale devices behave similarly on the given instances.

This problem, though theoretical in nature~\cite{AV14}, is already appearing as a major experimental challenge. One of the outstanding applications for the verification of quantum systems is in quantum chemistry, where the current state-of-the-art  is that the inability to verify quantum simulations is  much more the norm than the exception~\cite{GH05}. Any theoretical advance in this area could have dramatic consequences on applications of quantum chemistry simulations, including the potential to revolutionize drug discovery.
Another case where experimental techniques are reaching the limits of classical verifiability is in
the Boson Sampling problem~\cite{AA11}, where the process of verification has been raised as a fundamental objection to the viability of experiments~\cite{GKAE13} (fortunately, these claims are refuted~\cite{AA14}, and progress was made in the experimental verification~\cite{SVB+14}).

As mere classical probabilistic polynomial-time\footnote{\Ie,~assuming humans can flip coins and execute classical computations that take time polynomial in $n$ to solve on inputs of size~$n$.} individuals, we appear to be in an impasse: how can we validate the output of a quantum computation?\footnote{Assuming the widely-held belief that $\BQP \neq \BPP $, \ie,~that quantum computers are indeed more powerful than classical computers.}
For some problems of interest in quantum computing (such as factoring and search), a claimed solution can be efficiently verified by a classical computer. However, current techniques do not give us such an efficient verification procedure for the \emph{hardest} problems that can be solved by quantum computers (such problems are known as $\BQP$-complete, and include the problem of approximating the Jones polynomial~\cite{AJL06}).
Here, we propose a solution based on \emph{interaction}, viewing an experiment not in the traditional, static, predict-and-verify framework, but as an interaction between an experimentalist and a quantum device.  In the context of theoretical computer science, it has been established for quite some time that interaction between a  probabilistic polynomial-time \emph{verifier} and a computationally unbounded \emph{prover} allows the verification of a class of problems \emph{much} wider than what static proofs allow.\footnote{This is the famous $\IP=\PSPACE$ result~\cite{LFKN90,Sha92}.}

Interactive proof systems traditionally model the prover as being all-powerful (\ie,~computationally unbounded).\footnote{Notable exceptions include~\cite{CL95, GKR08}.} For our purposes, we restrict the prover to being a ``realistic'' quantum device, \ie,~we model the prover as a quantum polynomial-time machine. Our approach equates the verifier with a classical polynomial-time machine, augmented with \emph{extremely} rudimentary quantum operations, namely of being able to prepare random single pure-state qubits (chosen among a specific set, see \expref{Section}{sec:QPIP-definitions}). Our verifier does not require any quantum memory or quantum processing power.
Without loss of generality, the random quantum bits can be sent in the first round, the balance of the interaction and verifier's computation being classical.
Formally, we present our results in terms of an \emph{interactive proof system}, showing that in our model, it is possible to devise a
\emph{quantum-prover interactive proof system}
for all problems solvable (with bounded error) in quantum polynomial time.

\subsection{Related work}
\label{sec:related-work}

The complexity class $\QPIP$, corresponding to quantum-prover interactive proof systems, was originally defined by Aharonov, Ben-Or and Eban~\cite{ABE10}, who, using techniques from~\cite{BCG+06}, showed that $\BQP = \QPIP$ for a verifier with the capacity to perform  quantum computations on a constant-sized quantum register (together with polynomial-time classical computation). The main idea of~\cite{ABE10} is to encode the input into a \emph{quantum authentication code}~\cite{BCG+02}, and to use interactive techniques for \emph{quantum computing on authenticated data} in order to enable verification of a  quantum computation.
  This result was revisited in light of foundations of physics in~\cite{AV14}, and the protocol was also shown secure in a scenario of \emph{composition}~\cite{BGS13}.

In a different line of research, Kashefi and Fitzsimons~\cite{FK12} consider a measurement-based approach to the problem, giving a scheme that  requires the verifier to prepare only random single qubits: the main idea is to encode the computation into a larger one which includes a verification mechanism, and to execute the resulting computation using blind quantum computing~\cite{BFK09}. Thus, success of the encoded computation can be used to deduce the correctness of actual computation. A small-scale version of this protocol was implemented in quantum optics~\cite{BFKW13}. Further work by Kapourniotis,  Dunjko and Kashefi~\cite{KDK15} shows how to combine the~\cite{ABE10} and~\cite{FK12} protocols in order to reduce the quantum communication overhead; Kashefi and  Wallden~\cite{KW15} also show how to reduce the overhead of~\cite{FK12}.

To the best of our knowledge,  the proof techniques in these prior works appear as sketches only, or are cumbersome. In particular, the approach that uses quantum authentication codes~\cite{ABE10} is based on~\cite{BCG+06}. However, the full proof of security for~\cite{BCG+06} never appeared. Although~\cite{ABE10} makes significant progress towards closing this gap, it provides only a sketch of how the soundness is established in the interactive case (see, however, the very recent~\cite{ABEM17}). A full proof of soundness for~\cite{ABE10} follows from~\cite{BGS13}, however the proof is very elaborate and phrased in terms of a rather different cryptographic task (called  ``quantum one-time programs'').
In terms of the measurement-based approach, note that a proposed protocol for verification in~\cite{BFK09} was deemed incomplete~\cite{FK12}, but any gaps were addressed in~\cite{FK12}. In this case, however, the protocol (and proof) are very elaborate, and to the best of our knowledge, remain unpublished.\footnote{This work has been published as~\cite{FK17} during the review period of the current paper.} Note, however that follow-up work has appeared in peer-reviewed form~\cite{KDK15,KW15}, and that these works consider the more general problem of verification for \emph{quantum} inputs and outputs.

A related line of research also studies the problem of verification with a client that can perform only single-qubit measurements~\cite{HM15}; the case of untrusted devices is also considered in~\cite{HH16}. In sharp contrast to these approaches, Reichardt, Unger  and Vazirani~\cite{RUV13} show that it is possible to make the verifier \emph{completely} classical, as long as we postulate \emph{two} non-communicating entangled
provers.  (This
could be enforced, for instance, by space-like separation such that
communication between the provers would be forbidden by the limit on the
speed of light.)   %
The main technique used  is a \emph{rigidity theorem} which, provided that the provers pass a certain number of tests, gives the verifier a tight classical control on the quantum provers. Very recently, Coladangelo, Grilo,  Jeffery, and  Vidick~\cite{CGJV17} have used the techniques described here to achieve efficient schemes for verifying quantum computations in the model of a classical verifier and two entangled provers.

\subsection{Contributions}

Our main contributions are a new, simple  quantum-prover interactive proof system for~$\BQP$, with a verifier whose quantum power is limited to the random preparation of single-qubit pure states, together with a new proof
technique.  %

\paragraph{New protocol.}
All prior approaches to the verification of quantum computations required some type of encoding (either of the input or of the computation), or otherwise had the verifier perform part of the computation. In contrast,  our protocol achieves soundness via the verifier's random choice of different types of runs. This is a typical construction in interactive proofs, and in some sense it is surprising that it is used here for the
first time in the context of verifying quantum computations. According to the new protocol, the overhead required for verification can be reduced to repetition of a very simple protocol (with overhead at most linear compared to performing the original computation), and thus may lead to implementations sooner than expected  (in general, it is much easier to repeat an experiment using different initial settings, than to run a single, more complex experiment!).

\paragraph{New proof technique.} In order to prove soundness, we use the proof technique of a reduction to an ``entanglement-based'' protocol. This proof technique  originates from Shor and Preskill~\cite{SP00} and has been used in a number of quantum cryptographic scenarios, \eg,~\cite{DFSS05, DFPR14,FBS+14}. To the best of our knowledge, this is the first time that this technique is used in the context of the verification of quantum computations; we show how the technique provides a much-needed succinct and convincing method to prove soundness. In particular, it allows us to reduce the analysis of an interactive protocol to the analysis of a non-interactive one, and to formally delay the verifier's choice of run until \emph{after} the interaction with the prover.

Furthermore, this work unifies the two distinct approaches given above, (one based on quantum authentication codes and the other on measurement-based quantum computing). Indeed, one can view our protocol as performing a very basic type of quantum computing on authenticated data~\cite{BGS13}; with hidden gates being executed via a computation-by-teleportation process~\cite{GC99} that is reminiscent of measurement-based quantum computation, and thus of blind quantum computation~\cite{BFK09}.\looseness=-1

On the conceptual front, this work  focuses on the \emph{simplest possible} way to achieve a quantum-prover interactive proof system.
Via this process, we have
further emphasized  links between various
concepts. %
\begin{enumerate}
\item \textbf{A link between input privacy and verification.} Prior results~\cite{ABE10, BFK09, BGS13, FK12} all happened to provide both \emph{privacy} of a quantum computation and its \emph{verification} (one notable exception being the recent~\cite{FH15}). Here, we make this link explicit, essentially starting from input privacy and constructing a verifiable scheme (this was also done, to a certain extent in~\cite{BFK09, FK12}).

    \item \textbf{A link between fault-tolerant quantum computation and cryptography.} Prior results~\cite{BCG+06, ABE10, BGS13} used constructions inspired by fault-tolerant quantum computation. Here, we make the link even more explicit by using single-qubit gate gadgets that are adaptations of the gate gadgets used in fault-tolerant quantum computation.  Furthermore, our results also emphasize how the ubiquitous technique of  ``tracking the Pauli frame'' from fault-tolerant quantum computation can be re-phrased in terms of keeping track of an encryption key.\looseness=-1

\item \textbf{A link between entanglement and parallelization.} It is known that entanglement can reduce the number of rounds in quantum interactive proof systems~\cite{KKMV08}; a consequence of our entanglement-based protocol is that we can parallelize our interactive proof system to a single round, as long as we are willing to allow the prover to share entanglement with the verifier, and to perform adaptive measurements.
\end{enumerate}

\subsection{Overview of techniques}
\label{sec:overview-techniques}

The main idea for our quantum-prover interactive proof system is that the verifier chooses randomly to interact with the prover in one of three runs.
Among these runs, one is the \emph{computation} run, while the two others are \emph{test} runs. In an honest interaction, the output of the computation run is the result (a single bit) of evaluating the given quantum circuit. The test runs are used to detect a deviating prover; there are two types of test runs: an \emph{$\xgate$-test} and a \emph{$\zgate$-test}.
Intuitively (and formally proved in \expref{Section}{sec:EPR-based-QPIP}), we  see that the prover cannot distinguish between all three runs.  Thus, his  strategy must be invariant over the different runs. %
It should be clear now how this work links \emph{input privacy} with verification:  by varying the input to the computation, the verifier differentiates between test and computation runs; by input privacy, however, the prover cannot identify the type of run and thus any deviation from the prescribed protocol has a chance of being detected.

In more details, the runs have the following properties (from the point of view of the verifier)
\begin{itemize}
\item \textbf{Computation run.} In a computation run, the prover executes the target circuit on input~$\ket{0}^{\otimes n}$.
\item \textbf{$\xgate$-test run.} In an $\xgate$-test run, the prover executes the identity circuit on input $\ket{0}^{\otimes n}$. At the end of the computation, the verifier verifies that the result is~$0$.  This test also contains internal checks for cheating within the protocol.
\item \textbf{$\zgate$-test run.}  In a $\zgate$-test run, the prover executes the identity circuit on input $\ket{+}^{\otimes n}$.  This test run is used only as an internal check for cheating within the protocol.
\end{itemize}

In order for the prover to execute the above computations without being able to distinguish between the runs, we use a technique inspired by \emph{quantum computing on encrypted data (QCED)}~\cite{FBS+14,Bro15}: the input qubits are encrypted with a random Pauli, as are auxiliary qubits that are used to drive the computation.

Viewing the target computation as a sequence of gates in the universal
set of gates
$\{\xgate, \zgate, \hgate, \cnot, \rgate\}$ (see \expref{Section}{sec:prelim-notation} for notation), the task we face is, in the computation run, to perform these logical gates on encrypted quantum data. Furthermore, the $\xgate$- and $\zgate$-test runs should (up to an encryption key), leave the quantum wires in the $\ket{0}^{\otimes n}$ or $\ket{+}^{\otimes n}$ state, respectively.
Performing Pauli gates in this fashion is straightforward, as this can be done by adjusting the encryption key (in the computation run only). As we show in \expref{Section}{sec:CNOT-gadget}, the  $\cnot$ gate can be executed directly (since it does not have any effect on the wires for the test runs). The  $\rgate$-gate (\expref{Section}{sec:R-gadget}) is performed using a construction (``gate gadget'') inspired both by QCED and fault-tolerant quantum computation~\cite{BMP+00} (see also~\cite{BJ15}); the $\rgate$-gate gadget involves the use of an auxiliary qubit and  classical interaction. The $\hgate$ is performed thanks to an identity involving the $\hgate$ and $\pgate$ (\expref{Section}{sec:H-gadget}). Note that $\pgate$ can be accomplished as $\rgate^2$.

In order to prove soundness, we consider any general deviation of the prover, and show that such deviation can be mapped to an attack on the measured wires only, corresponding to an honest run of the protocol (without loss of generality, we can also delay all measurements until the end of the protocol).
Furthermore, because the computation is performed on encrypted data, by the \emph{Pauli twirl}~\cite{DCEL09}, this attack can be described as a convex combination of Pauli attacks on the measured qubits. Since all measurements are performed in the computational basis, $\zgate$~attacks are obliterated, and thus the only family of attacks of concern consists in  $\xgate$- and $\ygate$-gates applied to various measured qubits; these act as bit flips on the corresponding classical output. We show that the combined effect of test runs is to detect \emph{all} such attacks; this allows us to bound the probability that the verifier accepts a \emph{no}-instance. Since only $\xgate$ and $\ygate$ attacks require detection, one may wonder why we use also a $\zgate$-test run. The answer to this question lies in the implementation of the $\hgate$-gate: while its net effect is to apply the identity in the test runs,
its  internal workings  %
temporarily \emph{swap}
the roles of the $\xgate$- and $\zgate$-test
runs; %
thus the $\zgate$-test runs are also %
used to detect $\xgate$ and $\ygate$ errors.

Finally, some words on showing indistinguishability between the test and computation runs. This is done by showing that the verifier can delay her choice of  run (computation, $\xgate$- or $\zgate$-test) until \emph{after} the interaction with the prover is complete. This is accomplished via an entanglement-based protocol, where the verifier's messages to the prover consist in only half-EPR pairs, as well as classical random bits. These messages are identical in both the test and computation runs; as the verifier decides on the type of run only \emph{after} having the interacted with the prover. Depending on this choice, the verifier performs measurements on the system returned by the prover, resulting in the desired effect.

\subsection{Open problems}
The main outstanding open problem is the verifiability of a quantum computation with a  \emph{classical} verifier, interacting with a \emph{single} quantum polynomial-time prover. In this context, we
make   %
a few
observations. %
\begin{itemize}
\item  If the prover is unbounded, there exists a quantum interactive proof system for $\BQP$, since
$\QIP (=\PSPACE) =\IP$.\footnote{$\QIP =\PSPACE$ is due to~\cite{JJUW10}.}
\item  If $\P= \BQP$, there is a trivial quantum interactive proof system.
\item  One possible approach would be to relax the definition to require only \emph{computational} soundness (following the lines of Brassard, Chaum and Cr\'epeau~\cite{BCC88}, this would lead to a quantum interactive \emph{argument}). This approach seems promising, especially if we consider a computational assumption that is \emph{post-quantum} secure. If, via its interaction with the prover, a classical verifier accepts, then we can conclude that either the verifier performed the correct computation \emph{or} the prover has broken the computational assumption.
\end{itemize}

\subsection{Organization}
The remainder of this paper is organized as follows. \expref{Section}{sec:prelim} presents some preliminary notation and background.  \expref{Section}{sec:QPIP-definitions} defines quantum-prover interactive proofs and states our main theorem. \expref{Section}{sec:Interactive-Proof-aux} describes the interactive proof system, for which we show completeness (\expref{Section}{sec:completeness}), and soundness (\expref{Section}{sec:soundness}).

\section{Preliminaries}
\label{sec:prelim}
\subsection{Notation}
\label{sec:prelim-notation}

We assume the reader is familiar with the
basics of quantum information~\cite{NC00}.
We use the following well-known qubit gates
\begin{align}
  &\xgate: \ket{j} \mapsto \ket{j \oplus 1}\,,\\
  &\zgate: \ket{j} \mapsto (-1)^j\ket{j}\,,\\
  \text{Hadamard}\quad & \hgate: \ket{j} \mapsto \frac{1}{\sqrt{2}}(\ket{0}+ (-1)^j\ket{1})\,,\\
  \text{phase gate} \quad&  \pgate: \ket{j} \mapsto i^{j}\ket{j}\,,\\
  \text{$\pi/8$ rotation}\quad & \rgate: \ket{j} \mapsto e^{(i\pi/4)^j}\ket{j})\,,\qquad\text{and the}\\[.5ex]
  \text{two-qubit controlled-not}\quad & \cnot: \ket{j}\ket{k} \mapsto \ket{j} \ket{j\oplus k}\,.
\end{align}
Let $\ygate = i\xgate\zgate$.  We denote by $\mathbb{P}_n$ the set of
$n$-qubit Pauli operators, where $P \in \mathbb{P}_n$ is given by
$P = P_1 \otimes P_2 \otimes \cdots \otimes P_n$ where
$P_i \in \{I, \xgate, \ygate, \zgate\}$; we also denote an \emph{EPR
  pair}
\begin{equation}
  \ket{\Phi^+} = \frac{1}{\sqrt{2}}(\ket{00} + \ket{11})\,.
\end{equation}

\subsection{Quantum encryption and the Pauli twirl}
\label{sec:prelim-QOTP}
The quantum one-time pad encryption maps a single-qubit system $\rho$ to
\begin{equation}
  \frac{1}{4}\sum_{a,b \in \{0,1\}} \xgate^a \zgate^b \rho \zgate^b \xgate^a = \frac{I}{2}\,;
\end{equation}
its generalization to $n$-qubit systems is straightforward~\cite{AMTW00}.
Here, we take $(a,b)$ to be the classical private \emph{encryption key}. Clearly, this scheme provides information-theoretic security, while allowing decryption, given knowledge of the~key.
A useful observation is that if we have an \emph{a priori} knowledge of the quantum operator~$\rho$, then it may not be necessary to encrypt it with a full quantum one-time pad (\eg,~if the state corresponds to a pure state of the form $({1}/{\sqrt{2}})(\ket{0} + e^{i\theta}\ket{1})$, it can be encrypted with a random~$\zgate$), although there is no loss of generality in encrypting it with the full random Pauli. We use the two interpretations interchangeably.

Consider for a moment the classical one-time pad (that encrypts a plaintext message by XORing it with a random bit-string of the same length). It is intuitively clear that if an adversary (who does not know the encryption key) has access to the ciphertext only, and is allowed to modify it, then the effect of any adversarial attack (after decryption) is to probabilistically introduce bit flips in target locations. The quantum analogue of this is given by the \emph{Pauli twirl}~\cite{DCEL09}.  %

\begin{lemma}[Pauli Twirl]\label{lem:Pauli-twirl}
Let $P, P'$  $\in \mathbb{P}_n$. Then %
\begin{equation}
\frac{1}{\abs{\mathbb{P}_n}} \sum_{Q \in \mathbb{P}_n} Q^* P Q \rho Q^* P'^* Q = \begin{cases} 0,& P \neq P', \\ P\rho P^* ,& \text{otherwise}\,.\end{cases}
\end{equation}
\end{lemma}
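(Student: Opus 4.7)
The plan is to exploit the commutation/anti-commutation structure of the Pauli group. First I would record three basic properties of $n$-qubit Paulis $R \in \mathbb{P}_n$: they are Hermitian ($R^* = R$), they square to the identity ($R^2 = I$, since each of $I,\xgate,\ygate,\zgate$ does, and tensor products preserve this), and any two of them either commute or anti-commute. In particular, for each pair $R, Q \in \mathbb{P}_n$ there is a sign $\chi_R(Q) \in \{+1,-1\}$ such that $QR = \chi_R(Q)\,RQ$, which combined with $Q^{-1}=Q$ gives $QRQ = \chi_R(Q)\,R$. Applied to both the $P$- and $P'$-factors, the summand becomes $(QPQ)\rho(QP'Q) = \chi_P(Q)\chi_{P'}(Q)\,P\rho P'$, so the left-hand side equals
\[
\left(\frac{1}{|\mathbb{P}_n|}\sum_{Q\in\mathbb{P}_n}\chi_P(Q)\chi_{P'}(Q)\right)P\rho P'\,.
\]

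Next I would invoke the bilinearity of the commutation sign: the map $R \mapsto \chi_R(Q)$ is a group homomorphism from $\mathbb{P}_n$ (modulo global phases) into $\{\pm 1\}$, so $\chi_P(Q)\chi_{P'}(Q) = \chi_{PP'}(Q)$. Thus the coefficient in front of $P\rho P'$ is $\frac{1}{|\mathbb{P}_n|}\sum_{Q}\chi_{PP'}(Q)$, an ordinary character sum over the abelian quotient of $\mathbb{P}_n$ by its center.

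The endgame splits on whether $P = P'$. If $P = P'$, then $PP' = P^2 = I$, so $\chi_{PP'}(Q) = 1$ for every $Q$, the sum equals $|\mathbb{P}_n|$, and the expression reduces to $P\rho P = P\rho P^*$, as required. If $P \neq P'$, then $PP'$ is a non-identity Pauli (up to an irrelevant global phase), and I need to show $\chi_{PP'}$ is the non-trivial character; equivalently, that for any non-identity Pauli $R$ there exists some $Q \in \mathbb{P}_n$ with $QR = -RQ$. This is the only step requiring a small hands-on verification: decomposing $R = R_1 \otimes \cdots \otimes R_n$, pick an index $i$ with $R_i \neq I$ and choose a single-qubit Pauli $Q_i$ that anti-commutes with it (e.g.\ $\zgate$ if $R_i \in \{\xgate, \ygate\}$, or $\xgate$ if $R_i = \zgate$), then set $Q = I \otimes \cdots \otimes Q_i \otimes \cdots \otimes I$. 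Once such a $Q$ exists, left-multiplication by $Q$ is a bijection on $\mathbb{P}_n$ that flips the sign of $\chi_{PP'}$, so the values $+1$ and $-1$ appear equally often and the sum vanishes.

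The main obstacle is really just this final step: the bookkeeping of signs, bilinearity, and the character-sum cancellation; everything else is formal manipulation. Once the claim that $\chi_{PP'}$ is non-trivial whenever $PP' \neq I$ is in hand, the case analysis yields the stated formula immediately.
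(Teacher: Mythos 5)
Your proof is correct. Note that the paper does not actually prove Lemma~\ref{lem:Pauli-twirl}; it imports the statement from the cited reference~\cite{DCEL09}, so there is no in-paper argument to compare against. Your argument is the standard one: using that elements of $\mathbb{P}_n$ are Hermitian involutions that pairwise commute or anti-commute, you reduce the summand to $\chi_P(Q)\chi_{P'}(Q)\,P\rho P'$, combine the signs into the single character $\chi_{PP'}$, and evaluate the resulting character sum. The only point worth being slightly more careful about is that $\mathbb{P}_n$ as defined in the paper (tensor products of $I,\xgate,\ygate,\zgate$ with no phases) is not literally closed under multiplication, so $PP'$ and $Q_0Q$ live in $\mathbb{P}_n$ only up to a global phase; since $\chi_R(Q)$ is insensitive to global phases on either argument, your bijection and homomorphism claims go through on the phase quotient, which you implicitly acknowledge. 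The existence of an anti-commuting $Q$ for any non-identity Pauli, verified sitewise as you do, is exactly the needed non-triviality of the character, and the pairing argument then gives the vanishing of the off-diagonal terms.
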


We also obtain the classical case for the single-qubit  Pauli twirl,
alluded to above, as the following. %

\begin{lemma}[Classical Pauli Twirl]\label{lem:classical-Pauli-twirl-X} Let $c,i \in \{0,1\}$ and $P, P'$  $\in \mathbb{P}_1$.
Then %
\begin{equation}\frac{1}{2} \sum_{Q \in \{I, \xgate\}} \bra{i}  Q^* P Q \ket{c}\bra{c} Q^* P' Q \ket{i} =
 \begin{cases}
 0, & P \neq P',\\
\bra{i}   P  \ket{c}\bra{c}  P \ket{i},& \text{otherwise}.
  \end{cases}
    \end{equation}
\end{lemma}

\begin{proof}
The proof is a simple application of \expref{Lemma}{lem:Pauli-twirl},
together with the observation that $\ket{0}, \ket{1}$ are eigenstates
of $\zgate$:
\begin{align}
\frac{1}{2}\sum_{Q \in \{I, \xgate\}} \bra{i}  Q^* P Q \ket{c}\bra{c} Q^* P' Q \ket{i} &=
\frac{1}{4} \sum_{Q \in \{I, \xgate, \ygate, \zgate \}}   \bra{i}  Q^* P Q \ket{c}\bra{c} Q^* P' Q \ket{i}\\
&=\begin{cases}
 0,& P \neq P',\\
\bra{i} P\ket{c}\bra{c} P \ket{i},& \text{\emph{otherwise}}.
\end{cases}
\end{align}
\end{proof}
Working in the basis $\{\ket{+}, \ket{-}\}$, we also
obtain the following. 
\begin{lemma}\label{lem:classical-Pauli-twirl-Z} Let $c,i \in \{0,1\}$ and $P, P'$  $\in \mathbb{P}_1$. Then %
\begin{equation}\frac{1}{2} \sum_{Q \in \{I, \xgate\}} \bra{i}  Q^*H P H Q \ket{c}\bra{c} Q^*H P'H Q \ket{i} =
 \begin{cases}
 0, & P \neq P',\\
\bra{i} H  P H \ket{c}\bra{c}H  P H\ket{i},& \text{otherwise}.
  \end{cases}
    \end{equation}
\end{lemma}

\section{Definitions and statement of results}
\label{sec:QPIP-definitions}

Interactive proof systems were introduced by Babai~\cite{Bab85} and Goldwasser, Micali,
and Rackoff~\cite{GMR89}. An interactive proof system consists of an interaction between a
computationally unbounded prover and a computationally bounded probabilistic verifier.
For a language~$L$ and a string~$x$, the prover attempts to convince the verifier that $x \in L$,
while the verifier tries to determine the validity of this ``proof.''
Thus, a language~$L$ is said to have an interactive proof system if
there exists a polynomial-time verifier~$V$
with the following properties.  %
\begin{itemize}
 \item (Completeness) if~$x \in L$, there exists a prover (called an honest prover) such that the verifier accepts with probability $p \geq {2}/{3}$;
 \item (Soundness) if $x \not\in L$, no prover can convince $V$ to accept with probability $p \geq 1/3$.
 \end{itemize}
The class of languages having interactive proof systems is denoted~$\IP$.

Watrous~\cite{Wat03} defined~$\QIP$ as the quantum analogue of~$\IP$, \ie,~as the class of languages having a \emph{quantum} interactive proof system, which consists in a quantum interaction between a
computationally unbounded quantum prover and a computationally bounded quantum verifier, with the analogous completeness and soundness conditions as given above.

For our results, we are interested in the scenario of a polynomial-time prover (in the honest case), as well as  an \emph{almost-classical} verifier; that is, a verifier with the power to generate random qubits as specified by a parameter~$\calS$ (\expref{Definition}{defn:QPIP-system}). Furthermore, as a technicality, instead of considering languages, we consider
promise problems.  A promise  %
problem $\Pi = (\Pi_Y, \Pi_N)$ %
is a pair of disjoint sets of strings, corresponding to $\text{YES}$ and $\text{NO}$ instances, respectively. For a formal treatment of the model (which we specialize here to our scenario), see~\cite{Wat03}.

\begin{definition}
\label{defn:QPIP-system}
Let $\mathcal{S} = \{\calS_1, \ldots ,\calS_\ell\}$ where $\calS_i = \{\rho_1, \ldots ,\rho_{\ell_i}\}$ $(i=1, \ldots ,\ell)$ is a set of density operators.
A  $\calS$-\emph{quantum-prover Interactive Proof System   for a promise problem $\Pi = (\Pi_Y, \Pi_N)$} is an interactive proof system with a verifier $V$ that  runs in classical probabilistic polynomial time, augmented with the capacity to randomly generate states in each of  $\calS_1, \ldots ,\calS_\ell$ (upon generation, these states are immediately sent to the prover, with the index $ i \in \{1, \ldots ,\ell\}$ known to the verifier and prover, and the index $ j \in \{1, \ldots ,\ell_i\}$ known to the verifier only).
The interaction of the verifier $V$ and the prover $P$
satisfies the following conditions.
\begin{itemize}
 \item (Completeness) if~$x \in \Pi_Y$, there exists a quantum polynomial-time prover (called an honest prover) such that the verifier accepts with probability $p \geq {2}/{3}$;
 \item (Soundness) if $x \in \Pi_N$, no prover (even unbounded) can convince $V$ to accept with probability $p \geq 1/3$.
 \end{itemize}
\end{definition}
The class of promise problems having an  $\calS$-quantum interactive proof systems is denoted~$\QPIP_\calS$. Note that by standard amplification, the class $\QPIP_\calS$  is unchanged if we replace the completeness parameter $c$ and soundness parameter~$s$ by any values, as long as $c-s > {1}/{\poly(n)}$.

Comparing our definition of $\QPIP_\calS$ to the class of quantum-prover interactive proof systems ($\QPIP$) as given in~\cite{ABE10}, we note that we have made some modifications and clarifications, namely that the verifier in $\QPIP_\calS$ does not have any quantum memory and does not perform any gates ($\QPIP$ allows a verifier that stores and operates on a quantum register of a constant number of qubits), and that soundness holds against unbounded provers.

Finally, we use the %
canonical $\BQP$-complete problem~\cite{ABE10},  %
defined as follows.    %
\begin{definition} \label{defn-Q-circuit}
The input to the promise problem \textsf{Q-CIRCUIT} consists 
of a quantum circuit made of a sequence of gates, $U= U_T, \ldots, U_1$ acting on $n$ input
qubits.  (We take
these circuits to be given in the universal gateset $\{\xgate, \zgate, \hgate,
\cnot, \tgate\}.)$   %
Let
  \begin{equation}
    p(U)= \norm{\ket{0}\bra{0}\otimes \mathbb{I}_{n-1} U \ket{0^n}}^2
  \end{equation}
  be the probability of observing ``0'' as a result of a computational
  basis measurement of the $n^{\text{th}}$ output qubit, obtained by
  evaluating~$U$ on input $\ket{0^n}$.

Then define Q-CIRCUIT$=\{\mathsf{Q-CIRCUIT}_\mathsf{YES}, \mathsf{Q-CIRCUIT}_\mathsf{NO}\}$ with %
\begin{align}
\mathsf{Q-CIRCUIT}_\mathsf{YES} &:  p(U)  \geq 2/3\,, \\
\mathsf{Q-CIRCUIT}_\mathsf{NO} &:  p(U)  \leq 1/3\,.
\end{align}
\end{definition}
We can now formally state our main theorem.
\begin{theorem}[Main Theorem]
\label{thm:main-QPIP}
Let
\begin{equation}
  \mathcal{S} = \bigl\{ \{\ket{0}, \ket{1}\}, \{\ket{+}, \ket{-}\}, \{\pgate\ket{+}, \pgate\ket{-}\},
  \{\tgate\ket{+}, \tgate\ket{-}, \pgate\tgate\ket{+}, \pgate\tgate\ket{-}\}\bigr\}\,.
\end{equation}
Then $\BQP = \QPIP_\calS$.
\end{theorem}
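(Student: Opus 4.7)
The plan is to establish the two inclusions separately. The direction $\QPIP_\cS \subseteq \BQP$ is routine: a $\BQP$ machine can internally simulate both the (quantum polynomial-time) honest prover and the almost-classical verifier, sampling the verifier's random single-qubit states from $\cS$, running the interaction, and outputting the verdict; standard amplification via $O(1)$ independent repetitions boosts the gap $2/3$ versus $1/3$ to anything $\BQP$ requires. The substance is in the reverse inclusion $\BQP \subseteq \QPIP_\cS$, for which it suffices to give a $\cS$-quantum prover interactive proof system for the $\BQP$-complete promise problem $\mathsf{Q\text{-}CIRCUIT}$ of Definition~\ref{defn-Q-circuit}.

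To construct the protocol, I would follow the blueprint from Section~\ref{sec:overview-techniques}: the verifier flips a three-valued coin to select a \emph{computation} run, an \emph{$\xgate$-test} run, or a \emph{$\zgate$-test} run. In each run the verifier prepares the input wires as a quantum one-time pad encryption of $\ket{0}^{\otimes n}$ (computation run and $\xgate$-test) or $\ket{+}^{\otimes n}$ ($\zgate$-test); note these encrypted states are each just a uniformly random element of $\{\ket{0},\ket{1}\}$ or $\{\ket{+},\ket{-}\}$, both of which appear in $\cS$. The prover is then guided through the circuit $U = U_T \cdots U_1$ gate-by-gate using the gadgets in Sections~\ref{sec:CNOT-gadget}--\ref{sec:H-gadget}: Pauli gates are free (absorbed into an update of the classical key); $\cnot$ is applied directly; the $\rgate$ and $\hgate$ gadgets consume auxiliary qubits prepared from the remaining states in $\cS$ (namely $\{\pgate\ket{\pm}\}$ and $\{\rgate\ket{+}, \rgate\ket{-}, \pgate\rgate\ket{+}, \pgate\rgate\ket{-}\}$) and a short round of classical interaction for the Pauli-frame/key update. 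In an honest computation run the verifier decrypts the measured output wire to obtain $U\ket{0^n}$'s $n$th bit; in an $\xgate$-test she decrypts \emph{all} measured wires and demands $0^n$; in a $\zgate$-test she similarly decrypts in the $\xgate$-basis and demands $+^n$. Completeness follows by a direct verification that each gadget correctly transports the encrypted logical state through the circuit and updates the classical key consistently, so an honest prover is accepted with probability $\geq 2/3$ on YES-instances (in fact $1$ in the test runs and $p(U) \geq 2/3$ in the computation run).

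Soundness is the main obstacle, and the plan is to follow the Shor--Preskill style reduction sketched in the introduction. First I would rewrite the protocol as an \emph{entanglement-based} equivalent in which the verifier, instead of sampling and sending encryption keys and state labels, sends halves of EPR pairs (and classical random bits) and defers her choice of run and of measurement basis until \emph{after} the full interaction with the prover; because the prover's view is identical to that of the original protocol, any prover's acceptance probabilities transfer between the two. With measurements deferred, the prover's arbitrary deviation can be pushed to a single joint attack channel on the measured registers; the Pauli twirl (Lemma~\ref{lem:Pauli-twirl}), applied using the random one-time-pad key that encrypts each qubit, reduces this channel to a convex combination of Pauli attacks. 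Since all final measurements are in the computational basis, $\zgate$-components are invisible, so only the $\xgate$- and $\ygate$-type error patterns threaten correctness; these flip the relevant classical bits. I would then argue that the combined $\xgate$-test and $\zgate$-test runs detect every nonidentity such pattern with constant probability: the $\xgate$-test catches any bit-flip surviving on the logical wires, while the $\zgate$-test catches the $\xgate$/$\ygate$ errors that appear on what would be the $\xgate$-test wires during the execution of an $\hgate$-gadget (whose internal mechanics, as noted in the overview, momentarily swap the roles of the two test bases). Combining these, on a NO-instance the probability of acceptance in the computation run can be bounded away from the error-free value $p(U) \leq 1/3$ by a term controlled by the test-run rejection probability, giving the required gap between completeness and soundness after standard amplification. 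The trickiest bookkeeping will be tracking how errors propagate through the $\rgate$- and $\hgate$-gadgets so that the test runs really do cover every attack pattern — that is where I would expect to spend most of the effort.
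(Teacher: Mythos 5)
Your plan is correct and follows essentially the same route as the paper: the same three-run protocol with one-time-pad encryption and the $\cnot$/$\rgate$/$\hgate$ gadgets, the same reduction to an entanglement-based protocol that defers the verifier's choice of run, the same use of the Pauli twirl to collapse a general deviation to a convex combination of Pauli attacks, and the same case split between attacks that touch measured wires nontrivially (caught by one of the two test runs, with the $\hgate$-gadget's swapping of test roles handled exactly as you describe) and those that do not (which leave the computation run's acceptance probability at $p(U)$). The ``trickiest bookkeeping'' you defer---error propagation through the $\rgate$- and $\hgate$-gadgets and the resulting completeness/soundness gap of $8/9$ versus $7/9$ before amplification---is precisely where the paper's Sections~\ref{sec:correctness-R-gate}--\ref{sec:soundness} spend their effort.
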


\section{Quantum-prover interactive proof system}
\label{sec:Interactive-Proof-aux}

In order to prove \expref{Theorem}{thm:main-QPIP}, we give   an interactive proof system (see \textbf{\expref{Interactive Proof System}{prot:interactive-proof}}).
This protocol uses the various gate gadgets as described in \expref{Sections}{sec:Pauli-gadget}--\ref{sec:H-gadget}. Completeness is  studied in \expref{Section}{sec:completeness} and soundness is proved in \expref{Section}{sec:soundness}.

\begin{proof-system} \caption{\label{prot:interactive-proof} Verifiable quantum computation with trusted auxiliary states}
Let $\mathcal{C}$ be given as an $n$-qubit quantum circuit in the universal gateset $\xgate, \zgate, \cnot, \hgate, \rgate$.

\begin{enumerate}
\item %
The verifier randomly chooses to execute one of the following three runs (but does not inform the prover of this choice).

\begin{enumerate}[label=\Alph*.]
\item   \textbf{Computation Run}\label{step:CR}
\begin{enumerate}[label*=\arabic*.,ref=\arabic*]
\item The verifier encrypts input $\ket{0}^{\otimes n}$ %
and sends~the input to $P$.
\item \label{step:comp-aux}The verifier sends auxiliary qubits required for the $\rgate$-gate gadgets for the computation run as given in \expref{Sections}{sec:H-gadget} and~\ref{sec:R-gadget}.
\item For each gate $G$ in  $\cC$:  $\xgate, \zgate$ and $\cnot$ are performed without any auxiliary qubits or interaction as given in \expref{Sections}{sec:Pauli-gadget} and~\ref{sec:CNOT-gadget}, while the $\hgate$- and $\rgate$-gate gadgets are performed using the auxiliary qubits from \stepref{step:CR}{step:comp-aux} and the interaction as given in \expref{Sections}{sec:H-gadget} and~\ref{sec:R-gadget}, respectively.
\item $P$ measures the output qubit and returns the result to $V$.
\item $V$ decrypts the answer; let the result be~$c_{\text{comp}}$. $V$ accepts if~$c_{\text{comp}}=0$; otherwise reject.
\end{enumerate}

\item   \textbf{$\xgate$-test Run}\label{step:XR}
\begin{enumerate}[label*=\arabic*.,ref=\arabic*]
\item The verifier encrypts input $\ket{0}^{\otimes n}$  %
and sends~the input  to $P$.
\item  \label{step:comp-aux-x-test}The verifier sends auxiliary qubits required for the $\rgate$-gate gadgets for the $\xgate$-test run as given in
\expref{Sections}{sec:H-gadget} and~\ref{sec:R-gadget}.
\item  \label{step:comp-x-test} For each gate $G$ in  $\cC$:  $\xgate, \zgate$ and $\cnot$ are performed without any auxiliary qubits or interaction as given in \expref{Sections}{sec:Pauli-gadget} and~\ref{sec:CNOT-gadget}, while the $\hgate$- and $\rgate$-gate gadgets are performed using the auxiliary qubits from \stepref{step:XR}{step:comp-aux-x-test} and the interaction as given in \expref{Sections}{sec:H-gadget} and~\ref{sec:R-gadget}, respectively.
\item $P$ measures the output qubit and returns the result to $V$.
\item $V$ decrypts the answer; let the result be~$c_{\text{test}}$. $V$ accepts if no errors were detected in \stepref{step:XR}{step:comp-x-test} \emph{and} if $c_{\text{test}}=0$; otherwise reject.
\end{enumerate}

\item   \textbf{$\zgate$-test Run}\label{step:ZR}
\begin{enumerate}[label*=\arabic*.,ref=\arabic*]
\item The verifier encrypts input $\ket{+}^{\otimes n}$  %
and sends~the input  to $P$.
\item  \label{step:comp-aux-z-test}The verifier sends auxiliary qubits required for the $\rgate$-gate gadgets for the $\zgate$-test run as given in
\expref{Sections}{sec:H-gadget} and~\ref{sec:R-gadget}.
\item \label{step:comp-z-test} For each gate $G$ in  $\cC$:  $\xgate, \zgate$ and $\cnot$ are performed without any auxiliary qubits or interaction as given in \expref{Sections}{sec:Pauli-gadget} and~\ref{sec:CNOT-gadget}, while the $\hgate$- and $\rgate$-gate gadgets are performed using the auxiliary qubits from \stepref{step:ZR}{step:comp-aux-z-test} and the interaction as given in \expref{Sections}{sec:H-gadget} and~\ref{sec:R-gadget}, respectively.
    \item $P$ measures the output qubit and returns the result to $V$.
\item $V$ disregards the output. $V$ accepts if no errors were detected in \stepref{step:ZR}{step:comp-z-test}; otherwise reject.
\end{enumerate}
\end{enumerate}
\end{enumerate}
\end{proof-system}

\subsection{\texorpdfstring{$\xgate$}{X}- and \texorpdfstring{$\zgate$}{Z}-gate gadget}
\label{sec:Pauli-gadget}
In order to apply an $\xgate$ on a qubit register~$i$ encrypted with key $(a_i,b_i)$, the verifier updates the key according to $a_i \leftarrow a_i \oplus 1$ ($b_i$ is unchanged). In order to apply an $\zgate$ on a qubit register~$i$ encrypted with key $(a_i,b_i)$, the verifier updates the key according to $b_i \leftarrow b_i \oplus 1$ ($a_i$ is unchanged).
This operation is performed only in the computation run.

\subsection{\texorpdfstring{$\cnot$}{CNOT}-gate gadget}
\label{sec:CNOT-gadget}
In order to apply a $\cnot$ gate on the encrypted registers (say with register $i$ being the control and register $j$ the target), the prover simply applies the $\cnot$ gate on the respective registers. The verifier updates the encryption keys according to
$a_i  \leftarrow a_i; b_i  \leftarrow b_i \oplus b_j;
a_j  \leftarrow a_i \oplus a_j; \text{ and }
b_j  \leftarrow b_j$.
We mention that $\cnot (\ket{0}\ket{0}) = \ket{0}\ket{0}$ and $\cnot (\ket{+}\ket{+})=\ket{+}\ket{+}$; thus in the $\xgate$- and $\zgate$-test runs, the underlying data is unchanged.

\subsection{\texorpdfstring{$\rgate$}{T}-gate gadget}
\label{sec:R-gadget}

Here, we show how the $\rgate$ is performed on encrypted data. This is accomplished using an auxiliary qubit, as well as classical interaction.
 For the computation run, we use a combination of a protocol inspired from~\cite{FBS+14,Bro15}, as well as fault-tolerant quantum computation~\cite{BMP+00} (see also~\cite{BJ15}). This is given in \expref{Figure}{fig:R-gate-computation}. In the case of an $\xgate$ and $\zgate$ test runs, as usual, we want the identity map to be applied. This is done as in \expref{Figures}{fig:R-gate-X-test} and~\ref{fig:R-gate-Z-test}, respectively. Correctness of \expref{Figures}{fig:R-gate-computation}--\ref{fig:R-gate-Z-test} is proven in \expref{Section}{sec:correctness-R-gate}. Note that we show in \expref{Section}{section:resources-T-gate} that the set of auxiliary quantum states required by the verifier can be reduced via a simple re-labeling, in order to match the resources required in \expref{Theorem}{thm:main-QPIP}.

Also, note that in this work, we have slightly sacrificed efficiency for clarity in the proof, namely that we could have defined a $\pgate$-gadget using one simple auxiliary qubit instead of two auxiliary qubits that are used by implementing  the $\pgate$ as $\rgate^2$. Furthermore, we suspect that the $P^y$ gate is unnecessary in \expref{Figure}{fig:R-gate-computation} and thus that we can simplify the set $\mathcal{S}$ (however, the proof in this case appears to be more elaborate, so once more we choose clarity of the proof over efficiency).

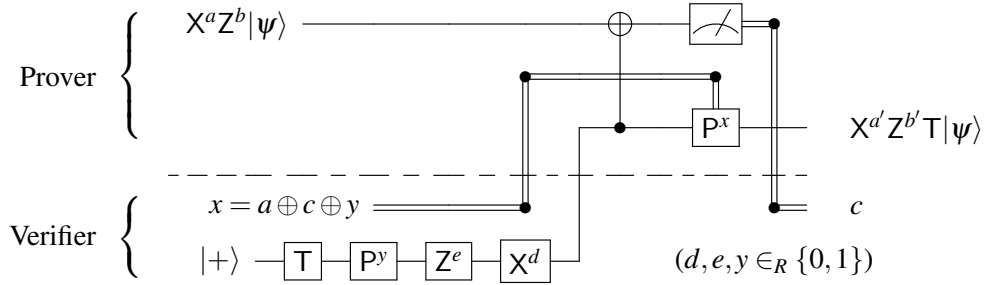
\begin{figure}[H]
\centerline{ %
 \Qcircuit @C=1em @R=1em  {
&&&&&&\lstick{\xgate^a \zgate^b\ket{\psi}} & \qw & \qw & \qw & \qw & \targ &\qw    &  \meter & \control \cw \cwx[4]  &    &   \\
\lstick{\text{Prover}}&&& & && &&& \control  & \cw    & \cw
&   \cw & \control \cw  & &&\\
&&&& & & && &&& \ctrl{-2} &\qw    & \gate{\pgate^x} \cwx &  \qw  &
\qw& \rstick{\xgate^{a'}  \zgate^{b'}  \tgate \ket{\psi}}\\  %
&&&\dw&\dw&\dw&\dw& \dw & \dw & \dw &\dw & \dw
 &  \dw & \dw  & \dw &\dw&\\
\lstick{\raisebox{-1cm}{\text{Verifier}}}&& &&&& &\lstick{x = a\oplus c \oplus y } &\cw & \control \cw \cwx[-3] & &
&   & & \control &  \cw& \rstick{c} \\
&&&&& \lstick{\ket{+}}   & \gate{\rgate} & \gate{\pgate^y} & \gate{\zgate^{e}} &\gate{\xgate^d}&
\qw \qwx[-3]
 &   & & &\mbox{($d, e, y \in_R \{0,1\} $)} & &&
 \rstick{}
 \gategroup{1}{2}{3}{2}{0.7em}{\{}
 \gategroup{5}{2}{6}{2}{0.7em}{\{}
 }
 }
 \caption{\label{fig:R-gate-computation} $\rgate$-gate gadget for a computation run. Here, an auxiliary qubit is prepared by the verifier in the state $\xgate^d\zgate^e\pgate^y\tgate\ket{+}$ and sent to the prover. The prover performs a $\cnot$ between the auxiliary register and the data register; and then measures the data register. Given the measurement result,~$c$, the verifier sends a classical message, $x= a \oplus c \oplus y $ to the prover, who applies the conditional gate $\pgate^x$ to the remaining register (which we now re-label as the data register). The verifier's key update rule is given  by $a' =a\oplus c $ and $b'= (a \oplus c)\cdot (d \oplus y)   \oplus a \oplus b \oplus c \oplus e  \oplus y$\,.}%
\end{figure}

\begin{figure}[H]
\centerline{
 \Qcircuit @C=1em @R=1em  {
&&&&&&\lstick{\xgate^a\ket{0}} & \qw & \qw & \qw & \qw & \targ &\qw    &  \meter & \control \cw \cwx[4]  &    &   \\
\lstick{\text{Prover}}&&& & && &&& \control  & \cw    & \cw
&   \cw & \control \cw  & &&\\
&&&& & & && &&& \ctrl{-2} &\qw    & \gate{\pgate^x} \cwx &  \qw  &
\qw& \rstick{\xgate^d \ket{0}}\\
&&&\dw&\dw&\dw&\dw& \dw & \dw & \dw &\dw & \dw
 &  \dw & \dw  & \dw &\dw&\\
\lstick{\raisebox{-1cm}{\text{Verifier}}}&& &&&& &\lstick{x \in_R \{0,1\}} &\cw & \control \cw \cwx[-3] & &
&   & & \control &  \cw& \rstick{c = a \oplus d} \\
&&&&& &\lstick{\ket{0}}   & \gate{\xgate^d} & \qw & \qw &
\qw \qwx[-3]
 &   & & &\mbox{($d \in_R \{0,1\} $)} & &&
 \rstick{}
 \gategroup{1}{2}{3}{2}{0.7em}{\{}
 \gategroup{5}{2}{6}{2}{0.7em}{\{}
 }
 }
 \caption{\label{fig:R-gate-X-test} $\rgate$-gate gadget for an $\xgate$-gate test run. The goal here is to mimic the interaction established in \expref{Figure}{fig:R-gate-computation}, but to perform the identity operation on the input state~$\ket{0}$ (up to encryptions). Here, we include an additional \emph{verification} that~$c=a \oplus d$.}
\end{figure}
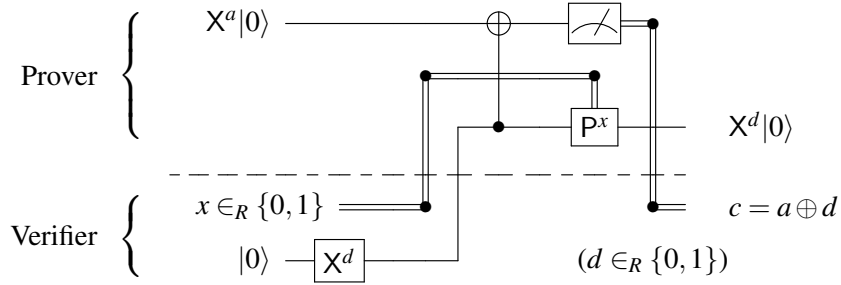

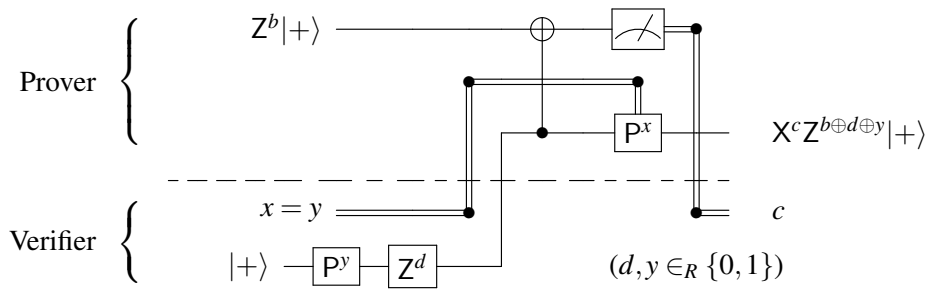
\begin{figure}[H]
\centerline{
 \Qcircuit @C=1em @R=1em  {
&&&&&&&\lstick{\zgate^b\ket{+}} & \qw & \qw & \qw & \targ &\qw    &  \meter & \control \cw \cwx[4]  &    &   \\
\lstick{\text{Prover}}&&& & && &&& \control  & \cw    & \cw
&   \cw & \control \cw  & &&\\
&&&& & & && &&& \ctrl{-2} &\qw    & \gate{\pgate^x} \cwx &  \qw  &
\qw& \rstick{ \xgate^c \zgate^{b\oplus d \oplus y} \ket{+}}\\
&&&\dw&\dw&\dw&\dw& \dw & \dw & \dw &\dw & \dw
 &  \dw & \dw  & \dw &\dw&\\
\lstick{\raisebox{-1cm}{\text{Verifier}}}&& &&&& &\lstick{x =y } &\cw & \control \cw \cwx[-3] & &
&   & & \control &  \cw& \rstick{c} \\
&&&&& &\lstick{\ket{+}}   & \gate{\pgate^y} & \gate{\zgate^{d}} & \qw &
\qw \qwx[-3]
 &   & & &\mbox{($d, y \in_R \{0,1\} $)} & &&
 \rstick{}
 \gategroup{1}{2}{3}{2}{0.7em}{\{}
 \gategroup{5}{2}{6}{2}{0.7em}{\{}
 }
 }
 \caption{\label{fig:R-gate-Z-test} $\rgate$-gate gadget for a  $\zgate$-gate test run.  The goal here is to mimic the interaction established in \expref{Figure}{fig:R-gate-computation}, but to perform the identity operation on the input state~$\ket{+}$ (up to encryptions).}
\end{figure}

\subsection{\texorpdfstring{$\hgate$}{H}-gate gadget}
\label{sec:H-gadget}

Performing a $\hgate$ gate has the effect of locally swapping between the $\xgate$- and $\zgate$-test runs (as well as swapping the role of the $\xgate$ and $\zgate$ encryption keys in the computation run).  While this is alright if done in isolation, it does not work if the $\hgate$-gate is performed as part of a larger computation (for instance, a $\cnot$-gate could no longer be performed as given above as the inputs would not, in general, be of the form $\ket{0}\ket{0}$ (for the $\xgate$-test run) or $\ket{+}\ket{+}$ (for the $\zgate$-test run)). Our solution is to use the following two
identities. %
\begin{align} \label{eqn:HPHPHPH}
 \hgate \pgate \hgate \pgate \hgate \pgate \hgate = \hgate\,,\\
 \hgate \hgate \hgate \hgate = \mathbb{I}\,.
 \end{align}
Thus we build the gadget so that the prover starts by applying an $\hgate$ at the start. By doing this,  we locally swap the roles of the $\xgate$- and $\zgate$-tests we also cause a key update which swaps the role of the $\xgate$- and $\zgate$-encryption keys. For the following $\pgate$, we apply twice the gadgets from \expref{Section}{sec:R-gadget}  (taking in to account the swapped role for the test runs). The result is that a $\pgate$ is applied in the computation run, while the identity is applied in the test runs. Now an $\hgate$ is applied, which reverts the roles of the $\xgate$- and $\zgate$-tests. We apply the $\pgate$ again. Continuing in this fashion,
we observe the following effect.  %
\begin{enumerate}
\item In the computation run (using twice the gadget of \expref{Figure}{fig:R-gate-computation} for each $\pgate$-gate), the effect is to apply~$\hgate$ on the input qubit (by \expeqref{Equation}{eqn:HPHPHPH}).
\item In the $\xgate$-test run (using (twice each time) the gadgets of \expref{Figures}{fig:R-gate-Z-test},   \ref{fig:R-gate-X-test},  \ref{fig:R-gate-Z-test} for the first, second and third $\pgate$-gate), the effect is to apply the identity.
\item In the $\zgate$-test run (using (twice each time) gadgets of \expref{Figures}{fig:R-gate-X-test},   \ref{fig:R-gate-Z-test},  \ref{fig:R-gate-X-test} for the first, second and third $\pgate$-gate), the effect is to apply the identity.
\end{enumerate}

\section{Correctness of the \texorpdfstring{$\rgate$}{T}-gate protocol}
\label{sec:correctness-R-gate}

\normalfont
We give below a step-by-step proof of the correctness of the
$\rgate$-gate protocol as given in \expref{Figure}{fig:R-gate-computation} (\expref{Section}{sec:R-gadget}).
In  \expref{Section}{sec:correctness-test}, we show how similar techniques are used to show corrections of the $\rgate$-gate protocol for the test runs, as given in \expref{Figures}{fig:R-gate-X-test} and~\ref{fig:R-gate-Z-test}. The basic
building block is the circuit identity for an \xgate-teleportation
from~\cite{ZLC00}. Also of relevance to
this work are the techniques developed by Childs, Leung, and
Nielsen~\cite{CLN05} to manipulate circuits that produce an output
that is correct \emph{up to known Pauli corrections}.

We will make use of the following identities which all hold up to an
irrelevant global phase: $ \xgate \zgate = \zgate \xgate$, $\pgate
\zgate = \zgate \pgate$, $\pgate \xgate = \xgate \zgate \pgate$,
$\rgate \zgate = \zgate \rgate$, $\rgate \xgate = \xgate \zgate
\pgate \rgate$,  $\pgate^2  = \zgate$ and $\pgate^{a\oplus b} =
\zgate^{a\cdot b} \pgate^{a + b}$ (for $a, b \in \{0,1\}$).

\begin{enumerate}

\item We start with the ``\xgate-teleportation''
of~\cite{ZLC00}, which is easy to verify (\expref{Figure}{fig:circuit-proof:2}).

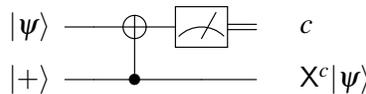
\begin{figure}[!ht]
 \centerline{
 \Qcircuit @C=1em @R=1em {
\lstick{\ket{\psi}} & \qw &\targ      &  \meter & \cw   &  \rstick{c}      \\
\lstick{\ket{+}}    & \qw &\ctrl{-1}  &   \qw  & \qw &
\rstick{\xgate^c\ket{\psi}}&   }
 }
 \caption{Circuit identity: ``$\xgate$-teleportation.''}
 \label{fig:circuit-proof:2}
\end{figure}

\item Then we substitute the input $\xgate^a \zgate^b \ket{\psi}$ for the top wire. We add the gate sequence $\tgate, \pgate^y, \zgate^e, \xgate^d$, $\pgate^{a \oplus c \oplus y }$ to the output
(\expref{Figure}{fig:circuit-proof:3}). By \expref{Figure}{fig:circuit-proof:2}, the outcome is given by $\pgate^{a \oplus c \oplus y } \xgate^d  \zgate^e \pgate^y \tgate \xgate^{a\oplus c}\zgate^b \ket{\psi}$.
We apply the identities given above to simplify this to a Pauli correction (on $\rgate$) %
as follows.

\begin{align} \label{eqn:circuit-3}
\pgate^{a \oplus c \oplus y} \xgate^d  \zgate^e \pgate^y \tgate \xgate^{a\oplus c}\zgate^b
& = \pgate^{a \oplus c \oplus y} \xgate^d  \zgate^e \pgate^y \xgate^{a\oplus c } \pgate^{a\oplus c} \zgate^{b \oplus a\oplus c} \tgate  \\
& = \pgate^{a \oplus c \oplus y } \xgate^d  \zgate^e \pgate^y  \pgate^{a\oplus c }\xgate^{a\oplus c }  \zgate^{b} \tgate  \\
& = \pgate^{a \oplus c \oplus y }  \pgate^y \xgate^d  \zgate^{d\cdot y \oplus e} \pgate^{a\oplus c}\xgate^{a\oplus c }  \zgate^{b} \tgate  \\
& = \pgate^{a \oplus c \oplus y }  \pgate^y  \pgate^{a\oplus c }\xgate^d  \zgate^{d\cdot(a \oplus c)} \zgate^{d\cdot y \oplus e}  \xgate^{a\oplus c}  \zgate^{b} \tgate  \\
& = \pgate^{(a \oplus c ) \oplus y}  \pgate^y  \pgate^{a\oplus c} \xgate^{a\oplus c \oplus d}  \zgate^{ d (a\oplus c \oplus y) \oplus b \oplus e } \tgate  \\
& = \zgate^{y\cdot(a \oplus c) } \pgate^{a \oplus c} \pgate^y \pgate^y  \pgate^{a\oplus c } \xgate^{a\oplus c \oplus d}  \zgate^{ d (a\oplus c \oplus y) \oplus b \oplus e } \tgate  \\
& =  \xgate^{a\oplus c \oplus d}  \zgate^{(a \oplus c \oplus d)\cdot (d \oplus y)   \oplus a \oplus b \oplus c \oplus d \oplus e  \oplus y   } \tgate \\
&= \xgate^{a\oplus c'}  \zgate^{(a \oplus c')\cdot (d \oplus y)   \oplus a \oplus b \oplus c' \oplus e  \oplus y}\tgate\,, \label{eqn:circuit-35}
\end{align}
where above we let $c' \leftarrow c \oplus d$.

\begin{figure}[!ht]
 \centerline{\hspace*{-2cm}
 \Qcircuit @C=1em @R=1em {
\lstick{\xgate^a \zgate^b \ket{\psi}} & \qw &\targ      &  \qw & \meter & \cw   &  \rstick{c; c' \leftarrow c\oplus d}      \\
\lstick{\ket{+}}    & \qw &\ctrl{-1}  &   \gate{\rgate} & \gate{\pgate^y} & \gate{ \zgate^e}  & \gate{\xgate^d} & \gate{\pgate^{a \oplus c \oplus y }} & \qw &
\rstick{ \xgate^{a\oplus c'}  \zgate^{(a \oplus c')\cdot (d \oplus y)   \oplus a \oplus b \oplus c' \oplus e  \oplus y}  \tgate \ket{\psi}}&   }
 }
 \caption{}
 \label{fig:circuit-proof:3}
\end{figure}

\item Next, we note that, because diagonal gates commute with control,  the circuit of \expref{Figure}{fig:circuit-proof:3} is equivalent to the one in \expref{Figure}{fig:circuit-proof:4}.

\begin{figure}[!ht]
 \centerline{
 \Qcircuit @C=1em @R=1em {
\lstick{\xgate^a \zgate^b \ket{\psi}} & \qw &\targ      &  \qw & \meter & \cw   & \rstick{c; c' \leftarrow c\oplus d}  \\
\lstick{\zgate^e\pgate^y\rgate\ket{+}}    & \qw &\ctrl{-1}  &   \gate{\xgate^d} & \gate{\pgate^{a \oplus c \oplus y}} & \qw &
\rstick{ \xgate^{a\oplus c'}  \zgate^{(a \oplus c')\cdot (d \oplus y)   \oplus a \oplus b \oplus c' \oplus e  \oplus y}  \tgate \ket{\psi}}   &   }
 }
 \caption{}
 \label{fig:circuit-proof:4}
\end{figure}

\item We note that the $\xgate^d$ on the bottom wire \emph{after} the $\cnot$ can be moved to the bottom wire  \emph{before} the $\cnot$, as long as we add an $\xgate^d$ to the top wire after the $\cnot$.  (\expref{Figure}{fig:circuit-proof:5}.) %

\begin{figure}[!ht]
 \centerline{
 \Qcircuit @C=1em @R=1em {
\lstick{\xgate^a \zgate^b \ket{\psi}} & \qw &\targ      &  \gate{\xgate^d} & \meter & \cw   &  \rstick{c; c' \leftarrow c\oplus d}      \\
\lstick{\xgate^d\zgate^e\pgate^y\rgate\ket{+}}    & \qw &\ctrl{-1}  &   \qw & \gate{\pgate^{a \oplus c \oplus y }} & \qw &
\rstick{ \xgate^{a\oplus c'}  \zgate^{(a \oplus c')\cdot (d \oplus y)   \oplus a \oplus b \oplus c' \oplus e  \oplus y}  \tgate \ket{\psi}} &   }
 }
 \caption{}
 \label{fig:circuit-proof:5}
\end{figure}

\item Finally, since $c' = c \oplus d$, yet the measurement result $c$ undergoes an $\xgate^d$, these two operations cancel out, and we obtain the final circuit as in \expref{Figure}{fig:circuit-proof:6}.
\begin{figure}[!ht]
 \centerline{
 \Qcircuit @C=1em @R=1em {
\lstick{\xgate^a \zgate^b \ket{\psi}} & \qw &\targ      &  \qw & \meter & \cw   &  \rstick{c}      \\
\lstick{\xgate^d\zgate^e\pgate^y\rgate\ket{+}}    & \qw &\ctrl{-1}  &   \qw & \gate{\pgate^{a \oplus c \oplus y }} & \qw &
\rstick{ \xgate^{a\oplus c}  \zgate^{(a \oplus c)\cdot (d \oplus y)   \oplus a \oplus b \oplus c \oplus e  \oplus y}  \tgate \ket{\psi}} &   }
 }
 \caption{}
 \label{fig:circuit-proof:6}
\end{figure}
\end{enumerate}

We note that a more direct proof of correctness for \expref{Figure}{fig:circuit-proof:6} is possible, but that our intermediate \expref{Figure}{fig:circuit-proof:5} is crucial in the proof of soundness.

\subsection{Correctness of the \texorpdfstring{$\tgate$}{T}-gate gadget in the test runs}
\label{sec:correctness-test}

The correctness of the $\tgate$-gate gadget in the $\xgate$-test run  of \expref{Figure}{fig:R-gate-X-test} is straightforward: the $\cnot$ flips the bit in the top wire if and only if $d=1$, while the $\pgate^x$ has no effect on the computational basis states.
The correctness of \expref{Figure}{fig:R-gate-Z-test} is derived from the $\xgate$-teleportation of \expref{Figure}{fig:circuit-proof:2}. Since the diagonal gates $\zgate$ and $\pgate$ commute with control,  they can be seen as acting on the output qubit. Furthermore,  using $\pgate^2 = \zgate$ and the fact that $\xgate$ has no effect on $\ket{+}$, we get the final circuit in \expref{Figure}{fig:R-gate-Z-test}.

\section{Completeness}
\label{sec:completeness}

Suppose $\mathcal{C}$ is a yes-instance of \textsf{Q-CIRCUIT}. Suppose $P$ follows the protocol honestly. 
Then we have the following.
\begin{enumerate}
\item In the case of a computation run, the output bit, $c_\text{comp}$ has the same distribution as the output bit of $C(\ket{0^n})$, thus  $V$ accepts with probability at least ${2}/{3}$.
\item In the case of an $\xgate$-test run and in the case of a $\zgate$-test run (by the identities and observations from the previous sections), $V$ accepts with probability~$1$.
\end{enumerate}

Given that each run happens with probability ${1}/{3}$, we get that $V$ accepts with probability at least ${2}/{3} + ({1}/{3}) \cdot ({2}/{3}) = {8}/{9}$.

\subsection{Auxiliary qubits for the \texorpdfstring{$\tgate$}{T}-gate gadget}
\label{section:resources-T-gate}
In the protocol for the $\tgate$-gate gadget (\expref{Figure}{fig:R-gate-computation}), we assume the verifier can produce auxiliary qubits of the form $\xgate^d\zgate^e\pgate^y\tgate\ket{+}$. We now show that this is equivalent to requiring the verifier to generate auxiliary qubits of the form $\zgate^e \pgate^y \tgate \ket{+}$, as claimed in \expref{Theorem}{thm:main-QPIP}.
This can be seen by the following equation, which holds up to global phase. 
\begin{equation}
\label{eqn:relabelling}
\xgate^d\zgate^e \pgate^y \tgate \ket{+} = \zgate^{e  \oplus d} \pgate^{y \oplus d} \tgate \ket{+}\,.
\end{equation}
The above can be seen easily since, up to global phase, $\xgate \rgate \ket{+} = \zgate \pgate \rgate \ket{+} $, and $\xgate \pgate = \zgate \pgate \xgate$%
.
The analysis for the case $d=1$, follows. 
\begin{align}
\xgate\zgate^e \pgate^y \tgate \ket{+} &= \zgate^e \xgate  \pgate^y \tgate \ket{+}\\
& = \zgate^e \zgate^y \pgate^y \xgate \tgate \ket{+} \\
&= \zgate^{e\oplus y}  \pgate^y \zgate \pgate  \tgate \ket{+} \\
&= \zgate^{e\oplus y \oplus 1}  \pgate^{y +1} \tgate \ket{+}\\
&=  \zgate^{e\oplus y \oplus 1}  \zgate^y \pgate^{y \oplus 1} \tgate \ket{+}\\
&=  \zgate^{e \oplus 1} \pgate^{y \oplus 1} \tgate \ket{+}\,.
\end{align}
Thus, the verifier chooses a classical $x$ uniformly at random, and if $x=1$, the verifier re-labels the auxiliary qubits according to \expeqref{Equation} {eqn:relabelling}.

\section{Soundness}
\label{sec:soundness}

As discussed in \expref{Section}{sec:overview-techniques}, the main idea to prove soundness is to analyze an entanglement-based version of the \expref{Interactive Proof System}{prot:interactive-proof}. We present the EPR-based version (\expref{Section}{sec:EPR-based-QPIP}),  and argue why the completeness and soundness parameters are the same.
Then, we analyze a general deviating prover $P^*$ in the EPR-based version and show how to simplify an attack (\expref{Section}{sec:simplifying-general}). We then analyze the case of a test run (\expref{Section}{sec:soundness-testrun}) and of a computation run (\expref{Sections}{sec:soundness-computation run}). In \expref{Section}{sec:soundness-proof}, we show how this completes the proof of our main theorem (\expref{Theorem}{thm:main-QPIP}).

An interesting consequence of the analysis in this section is that it implies that,
if we are willing to have the prover and the verifier share entanglement,
then the protocol reduces to a single round.
(However, in this case, the work of the verifier becomes
more important; one can wonder if the verifier is still ``almost-classical''.)
Another interesting observation is
that sequential repetition is not required (parallel repetition suffices), due to the fact that the analysis  makes use of the Pauli twirl (see \expref{Section}{sec:simplifying-general}), which would also be applicable to the scenario of parallel repetition.

\subsection{EPR-based protocol}
\label{sec:EPR-based-QPIP}

In this version of the quantum-prover interactive proof system (\textbf{\expref{Interactive Proof System}{prot:interactive-proof-EPR}}), \emph{all} quantum inputs sent by the verifier are half-EPR pairs, and \emph{all} classical messages sent by the verifier are random bits. The actions related to choosing between test and computation runs are done \emph{after} the interaction with the server.  For the~$\rgate$-gate, this can be done as shown in \expref{Figures}{fig:R-gate-EPR-computation}, \ref{fig:R-gate-EPR-X-test} and~\ref{fig:R-gate-EPR-Z-test}.

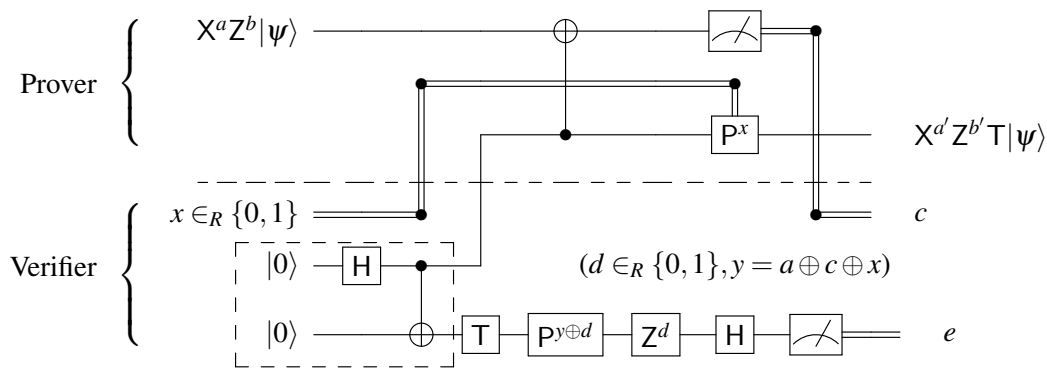
\begin{figure}[H]
\centerline{%
 \Qcircuit @C=1em @R=1em  {
&&&&&&&\lstick{\xgate^a \zgate^b\ket{\psi}} & \qw & \qw & \qw & \targ &\qw    &  \meter & \control \cw \cwx[4]  &    &   \\
\lstick{\text{Prover}}&&&&& & && & \control  & \cw    & \cw
&   \cw & \control \cw  & &&\\
&&&&&& & & && & \ctrl{-2} &\qw    & \gate{\pgate^x} \cwx &  \qw  &
\qw& \rstick{\xgate^{a'}  \zgate^{b'} \tgate   \ket{\psi}}\\ %
&&&&\dw &\dw&\dw& \dw & \dw & \dw &\dw & \dw
 &  \dw & \dw  & \dw &\dw&\\
&& &&&& &\lstick{x \in_R \{0,1\}} &\cw & \control \cw \cwx[-3] & &
&   & & \control &   \cw  & \rstick{c} \\
\lstick{\text{Verifier}}&&& &&&&\lstick{\ket{0}}    & \gate{\hgate}
&\ctrl{1} & \qw \qwx[-3]
 & &  &\mbox{($ d \in_R \{0,1\}, y = a \oplus c \oplus x $)}  & & &&\\
&& &&&& &\lstick{\ket{0}}    & \qw & \targ & \gate{\rgate}
 & \gate{\pgate^{y \oplus d}}  & \gate{\zgate^{d}}& \gate{\hgate} &  \meter &  \cw  & \cw
   & \rstick{e}
    \gategroup{6}{6}{7}{10}{1.4em}{--}
     \gategroup{1}{2}{3}{2}{0.7em}{\{}
 \gategroup{5}{2}{7}{2}{0.7em}{\{}
 }
 }
  \caption{\label{fig:R-gate-EPR-computation}Entanglement-based protocol for a $\rgate$-gate (computation run). This protocol
performs the same computation as the protocol in
\expref{Figure}{fig:R-gate-computation}. The output is obtained from  the output of \expref{Figure}{fig:R-gate-computation} by using $y = a \oplus c \oplus x$. The circuit in
the dashed box prepares an EPR-pair. Here, $a'= a\oplus c$ and $b'=(a \oplus c) \cdot (d \oplus y) \oplus a \oplus  b \oplus c \oplus e \oplus y$\,.} %
\end{figure}

\begin{figure}[H]
\centerline{
 \Qcircuit @C=1em @R=1em  {
&&&&&&&\lstick{\xgate^a\ket{0}} & \qw & \qw & \qw & \targ &\qw    &  \meter & \control \cw \cwx[4]  &    &   \\
\lstick{\text{Prover}}&&&&& & && & \control  & \cw    & \cw
&   \cw & \control \cw  & &&\\
&&&&&& & & && & \ctrl{-2} &\qw    & \gate{\pgate^x} \cwx &  \qw  &
\qw& \rstick{\xgate^{d} \ket{0}}\\
&&&&\dw &\dw&\dw& \dw & \dw & \dw &\dw & \dw
 &  \dw & \dw  & \dw &\dw&\\
&& &&&& &\lstick{x \in_R \{0,1\}} &\cw & \control \cw \cwx[-3] & &
&   & & \control &   \cw  & \rstick{c = a \oplus d} \\
\lstick{\text{Verifier}}&&& &&&&\lstick{\ket{0}}    & \gate{\hgate}
&\ctrl{1} & \qw \qwx[-3]
 &   & & & & &&\\
&& &&&& &\lstick{\ket{0}}    & \qw & \targ & \qw
 & \qw  & \qw &  \meter &  \cw  & \cw
   & \rstick{d}
    \gategroup{6}{6}{7}{10}{1.4em}{--}
     \gategroup{1}{2}{3}{2}{0.7em}{\{}
 \gategroup{5}{2}{7}{2}{0.7em}{\{}
 }
 }
  \caption{\label{fig:R-gate-EPR-X-test}Entanglement-based protocol for a $\rgate$-gate ($\xgate$-test run). This protocol
performs the same computation as the protocol in
\expref{Figure}{fig:R-gate-X-test}. The circuit in
the dashed box prepares an EPR-pair. As in \expref{Figure}{fig:R-gate-X-test}, we include an additional \emph{verification} that $c=a \oplus d$.}
\end{figure}
\begin{figure}[H]
\centerline{
 \Qcircuit @C=1em @R=1em  {
&&&&&&&\lstick{\zgate^b\ket{+}} & \qw & \qw & \qw & \targ &\qw    &  \meter & \control \cw \cwx[4]  &    &   \\
\lstick{\text{Prover}}&&&&& & && & \control  & \cw    & \cw
&   \cw & \control \cw  & &&\\
&&&&&& & & && & \ctrl{-2} &\qw    & \gate{\pgate^x} \cwx &  \qw  &
\qw& \rstick{\xgate^{c}\zgate^{b \oplus d \oplus y}  \ket{+}}\\
&&&&\dw &\dw&\dw& \dw & \dw & \dw &\dw & \dw
 &  \dw & \dw  & \dw &\dw&\\
&& &&&& &\lstick{x \in_R \{0,1\}} &\cw & \control \cw \cwx[-3] & &
&   & & \control &   \cw  & \rstick{c} \\
\lstick{\text{Verifier}}&&& &&&&\lstick{\ket{0}}    & \gate{\hgate}
&\ctrl{1} & \qw \qwx[-3]
 &   & \mbox{($y = x $)}& & & &&\\
&& &&&& &\lstick{\ket{0}}    & \qw & \targ & \qw
 & \gate{\pgate^y} &  \gate{\hgate} &  \meter   & \cw
   & \rstick{d }
    \gategroup{6}{6}{7}{10}{1.4em}{--}
     \gategroup{1}{2}{3}{2}{0.7em}{\{}
 \gategroup{5}{2}{7}{2}{0.7em}{\{}
 }
 }
  \caption{\label{fig:R-gate-EPR-Z-test}Entanglement-based protocol for a $\rgate$-gate ($\zgate$-test run). This protocol
performs the same computation as the protocol in
\expref{Figure}{fig:R-gate-Z-test}. The circuit in
the dashed box prepares an EPR-pair.}
\end{figure}

We therefore define $V_{\text{EPR}}$ as a verifier that delays all choices until after the Prover has returned all messages (\ie, as the verifier in \expref{Interactive Proof System}{prot:interactive-proof-EPR}).
That the soundness parameter is the same for \expref{Interactive Proof Systems}{prot:interactive-proof} and~\ref{prot:interactive-proof-EPR} follows from the following series of transformations, each of which preserves the probability of accept/reject.
\begin{enumerate}
\item The  quantum communication from the verifier to the prover in \expref{Interactive Proof System}{prot:interactive-proof} can be generated instead by preparing EPR pairs, sending half and then immediately measuring in the appropriate basis to obtain the required qubit. Call this Interactive Proof System 1.1.
\item  Starting from Interactive Proof System 1.1, by a change of variable, we can specify the classical message $x$ to be random in each of the $\tgate$-gate gadgets. This determines~$y$, and the operation
 $\pgate^y$, that is conditioned on~$y$ (for the computation and $\zgate$-test run) can be applied immediately before the verifier's measurement. Call this Interactive Proof System 1.2.
\item In Interactive Proof System 1.2, the prover and verifier operate on different subsystems. 
Their operations therefore commute and we can specify that the Verifier delays his operations until \emph{after} the interaction with the Prover. The result of this transformation is \expref{Interactive Proof System}{prot:interactive-proof-EPR}.
\end{enumerate}

Since each transformation above preserves the soundness, and since the completeness parameter is unchanged,
from now on, we can focus on establishing the soundness parameter for \expref{Interactive Proof System}{prot:interactive-proof-EPR}.

\begin{proof-system} \caption{\label{prot:interactive-proof-EPR} Verifiable quantum computation with trusted auxiliary states- EPR version}
Let $\mathcal{C}$ be given as an $n$-qubit quantum circuit in the universal 
set of gates
 $\xgate, \zgate, \cnot, \hgate, \rgate$.
\begin{enumerate}
\item The verifier prepares $\ket{\Phi^+}^{\otimes n}$ and sends  half of each pair to the prover. These registers are identified with the \emph{input} registers.
\item \label{step:gates-EPR-protocol} For each auxiliary qubit required in the $\hgate$- and $\rgate$-gate gadgets, the verifier prepares  $\ket{\Phi^+}$ and sends half of each pair to the prover.
\item The prover executes the gate gadgets. The verifier records the classical communication and responds with random classical bits (when required).\label{step:gates-EPR-protocol-execute}
\item The prover returns a single bit of output,~$c$ to the verifier.
\item The verifier randomly chooses to execute one of the following three runs (but does not inform the prover of this choice).

\begin{enumerate}[label=\Alph*.]
\item   \textbf{Computation Run}
\begin{enumerate}[label*=\arabic*.]
\item \label{step:input-EPR-Computation}Measure the remaining input register halves in the computational basis. Take the initial $\xgate$-encryption key to be the measurement outcomes (set the $\zgate$-key to 0).
\item For each gate $G$ in  $\cC$:  perform the key updates for the $\xgate, \zgate$ , $\cnot$ and $\hgate$ gates. For the $\rgate$ gadget, taking into account the classical messages received and sent in \expref{Step}{step:gates-EPR-protocol-execute}, perform the measurement and key update rules for the $\tgate$-gadget (\expref{Figure}{fig:R-gate-EPR-computation}).
\item $V$ decrypts the output bit~$c$; let the result be~$c_{\text{comp}}$. $V$ accepts if~$c_{\text{comp}}=0$; otherwise reject.
\end{enumerate}

\item   \textbf{$\xgate$-test Run}\label{step:2XR}
\begin{enumerate}[label*=\arabic*.,ref=\arabic*]
\item \label{step:input-EPR-X-test} Measure the remaining input register halves in the computational basis. Take the initial $\xgate$-encryption key to be the measurement outcomes (set the $\zgate$-key to 0).
\item \label{step:comp-x-test-EPR} For each gate $G$ in  $\cC$:  perform the key updates for the $\xgate, \zgate$, $\cnot$ and $\hgate$ gates. For the $\rgate$ gadget, taking into account the classical messages received and sent in \expref{Step}{step:gates-EPR-protocol-execute}, perform the measurement, key update rules and tests for the $\tgate$-gadget (\expref{Figure}{fig:R-gate-EPR-X-test}).
\item $V$ decrypts the output bit~$c$; let the result be~$c_{\text{comp}}$. $V$ accepts if~$c_{\text{comp}}=0$ \emph{and} if no errors were detected in \stepref{step:2XR}{step:comp-x-test-EPR}; otherwise reject.
\end{enumerate}

\item   \textbf{$\zgate$-test Run}\label{step:2ZR}
\begin{enumerate}[label*=\arabic*.,ref=\arabic*]
\item \label{step:input-EPR-Z-test} Measure the remaining input register halves in the Hadamard basis. Take the initial $\zgate$-encryption key to be the measurement outcomes (set the $\xgate$-key to 0).
\item \label{step:comp-z-test-EPR} For each gate $G$ in  $\cC$:  perform the key updates for the $\xgate, \zgate$, $\cnot$ and $\hgate$ gates. For the $\rgate$ gadget, taking into account the classical messages received and sent in \expref{Step}{step:gates-EPR-protocol-execute}, perform the measurement, key update rules and tests for the $\tgate$-gadget (\expref{Figure}{fig:R-gate-EPR-Z-test})
\item  $V$ accepts if no errors were detected in \stepref{step:2ZR}{step:comp-z-test-EPR}; otherwise reject.
\end{enumerate}
\end{enumerate}
\end{enumerate}

\end{proof-system}

\subsection{Simplifying a general attack}
\label{sec:simplifying-general}

In this section, we derive a simplified expression for a general deviating prover for our Interactive Proof System. %
We  
show  that without loss of generality, we can
rewrite  %
the actions of any deviating prover as the honest prover's actions, followed by an arbitrary cheating map. But first, as a technicality, we consider Interactive Proof System~3, which is closely related to \expref{Interactive Proof System}{prot:interactive-proof-EPR}, but  where the prover is unitary and show (\expref{Lemma}{lem:purified-EPR-protocol}) that a bound on the soundness of  Interactive Proof System~3 implies a bound on the soundness of \expref{Interactive Proof System}{prot:interactive-proof-EPR}.

\begin{definition}
  We define \emph{Interactive Proof System~3} as \expref{Interactive Proof System}{prot:interactive-proof-EPR}, but with a \emph{unitary} prover. To be more precise, the prover (say, $P^q$) in Interactive Proof System~3 performs the same operations as the prover in \expref{Interactive Proof System}{prot:interactive-proof-EPR}, but does not perform any measurements:  instead, $P^q$ sends qubits to the verifier, say  $V^{q}_{\text{EPR}}$, who immediately measures them in place of the prover, and then continues according to the \expref{Interactive Proof System}{prot:interactive-proof-EPR}.
\end{definition}

\begin{lemma} \label{lem:purified-EPR-protocol}
Interactive Proof Systems~2 and~3 have the same completeness, and furthermore, an upper bound on the soundness parameter for Interactive Proof System~3 is an upper bound on the soundness parameter for Interactive Proof System~2.
\end{lemma}

\begin{proof}
It follows immediately by definition that Interactive Proof Systems~2 and~3 have the same completeness parameter.
For soundness,
suppose that in Interactive Proof System~3, the probability that $V^{q}_{\text{EPR}}$ accepts while interacting with \emph{any} ${P^q}^*$ is at most~$s$. Suppose for a 
contradiction
that there is a $P^*$ for \expref{Interactive Proof System}{prot:interactive-proof-EPR}, such that  $V_{\text{EPR}}$ accepts with $p > s$.

Using polynomial overhead, we  obtain $\widetilde{P^*_\text{meas}}$ from $P^*$ via purification (all actions of $\widetilde{P^*_\text{meas}}$ are unitary, except a one-qubit register is measured each time the protocol requires a classical message to be sent to the verifier). The probability $p$ of acceptance is unchanged.

Furthermore, starting with $P^*$, we define $\widetilde{P^*}$ as a prover for Interactive Proof System~3, which  behaves like $\widetilde{P^*_\text{meas}}$, but instead of measuring messages to be sent to the verifier, it sends qubits, which are immediately measured by $V^{q}_{\text{EPR}}$, as per Interactive Proof System~3.
The probability $p$ that the verifier in Interactive Proof System~3 accepts, when interacting with $\widetilde{P^*}$ is the same as the probability that the verifier in \expref{Interactive Proof System}{prot:interactive-proof-EPR} accepts, when interacting with $P^*$. This contradicts $p > s$, and proves the claim.
 \end{proof}

 Next, we show that, in Interactive Proof System~3, without loss of generality, we can assume that the prover's actions are the honest unitary ones, followed by a general
attack.
In order to see this, for a $t$-round protocol (involving $t$ rounds of classical interaction),  define a cheating prover's actions at round~$i$ %
by $\Phi_i H_i$, where $H_i$  acts on the qubits used in the computation, as well as the classical bit received in round~$i$, and is the honest application of the prover's unitary circuit, while $\Phi_i$ is a general deviating map acting on the  classical bit received in round $i$, the output registers of $H_i$ as well as a private memory register. (Recall that there are no measurements at this point---$V$ does the measurement.)

Thus the actions of a general prover~$P^*$ %
are given as %
the follows.  %
 \begin{equation}
 \label{eqn:general-attack}
\Phi_t  H_t\ldots\Phi_1   H_1 \Phi_0H_0\,.
 \end{equation}
 Since $H_0, \ldots ,H_t$ are unitary, we can
rewrite   %
 \expeqref{Equation} {eqn:general-attack}. 
\begin{align}
 \Phi_t  H_t\ldots  \Phi_3   H_3 \Phi_2   H_2 \Phi_1   H_1 \Phi_0H_0 &=
  \Phi_t  H_t\ldots  \Phi_3   H_3 \Phi_2   H_2 (\Phi_1   H_1 \Phi_0 {H_1}^*) H_1 H_0\\
&=    \Phi_t  H_t\ldots  \Phi_3   H_3 (\Phi_2   H_2 \Phi_1   H_1 \Phi_0 {H_1}^* {H_2}^*)H_2  H_1 H_0\\
&=    \Phi_t  H_t\ldots ( \Phi_3   H_3 \Phi_2   H_2 \Phi_1   H_1 \Phi_0 {H_1}^* {H_2}^* H_3^*)H_3 H_2  H_1 H_0\\
&=  (\Phi_t  H_t\ldots  \Phi_3   H_3 \Phi_2   H_2 \Phi_1   H_1 \Phi_0 {H_1}^* {H_2}^* H_3^* \ldots  H_t^*) H_t  \ldots  H_3   H_2 H_1 H_0\,.
\end{align}
Thus, by denoting a general attack by
\begin{equation}
  \Phi = \Phi_t  H_t\ldots  \Phi_3   H_3 \Phi_2   H_2 \Phi_1   H_1 \Phi_0 {H_1}^* {H_2}^* H_3^* \ldots  H_t^*\,,
\end{equation}
and the map corresponding to the honest prover as
$H = H_t \ldots H_3 H_2 H_1 H_0$, we get that without loss of
generality, we can assume that the prover's actions are the honest
ones, followed by a general attack:
\begin{equation}
 \label{eqn:general-attack-simple}
\Phi H\,.
\end{equation}

Taking \{$E_k$\} to be  Kraus terms associated with $\Phi$, and supposing a total of $m$ qubit registers are involved,
we get that the system after interaction with~$P^*$, where the initial state is
\begin{equation}
  \egoketbra{\Phi^+}^{\otimes m} =  \frac{1}{2^m}\sum_{i,j=0}^{2^m-1}\ketbra{ii}{jj}
\end{equation}
(here, we include the classical random bits, as they are uniformly
random and therefore we can represent them as maximally entangled
states), and where the verifier has not yet performed the gates and
measurements, can be described as%
\begin{equation}
\label{eqn:attack-on-EPR-I}
  \frac{1}{2^m}\sum_k  \sum_{i,j}  (I\otimes E_k H ) \ketbra{ii}{jj} (I\otimes H^* E_k^*)\,.
\end{equation}

For a fixed $k$, we write  $E_k$ and $E_k^*$ in the Pauli basis:
\begin{equation}
  E_k = \sum_Q \alpha_{k,Q} Q\qquad\text{and}\qquad E_k^* = \sum_{Q'} \alpha_{k,Q'}^* Q'\,.
\end{equation}
(To 
simplify notation, we assume throughout that $Q$, $Q'$ ranges
over~$\mathbb{P}_{m}$  
.)
By the completeness relation, we have:
\begin{equation}
  \sum_k\sum_Q\abs{\alpha_{k,Q}}^2 =1\,.
\end{equation}
When it is clear from context, we drop the ``$k$'' subscript, thus
denoting
\begin{equation}
  E_k = \sum_Q \alpha_{Q} Q\qquad\text{and}\qquad E_k^* = \sum_{Q'} \alpha_{Q'}^* Q'\,.
\end{equation}
In the following sections, we analyze the probability of acceptance,
as a function of the type of run and of the prover's attack. By
\expref{Lemma}{lem:purified-EPR-protocol}, a bound on the acceptance
probability gives a bound on the acceptance probability in
\expref{Interactive Proof System}{prot:interactive-proof-EPR} (which
is a bound on the acceptance probability of \expref{Interactive Proof
  System}{prot:interactive-proof}).

\subsection{Conventions and definitions}
\label{sec:conventions}

In addition to the convention of representing an attack as in \expeqref{Equation} {eqn:attack-on-EPR-I},
in the following \expref{Sections}{sec:soundness-testrun}--\ref{sec:soundness-computation run}, we use the following conventions.%
\begin{enumerate}
\item The circuit $C$ that we consider is already ``compiled'' in terms of the $\hgate$ gates as in \expref{Section}{sec:H-gadget} (the identity $\hgate = \hgate \pgate \hgate \pgate \hgate \pgate\hgate$ is already applied).
\item The number of $\tgate$-gate gadgets is $t$ (each such gadget uses two auxiliary qubits---one representing an auxiliary quantum bit, and one representing a classical bit~$x$), and the number of qubits in the computation is $n$. %
    Thus we have $m=2t+n$.
     \item In the $\tgate$-gate gadget, the auxiliary wire is swapped with the measured wire immediately before the measurement. %
     This
      way, we may picture that only auxiliary qubits are measured as part of the computation, and that the data registers for the input  represent the computation wires throughout.
         \item  Given the system as in \expeqref{Equation} {eqn:attack-on-EPR-I}, we suppose that the first $\tgate$-gate gadget uses the first EPR pair  as auxiliary quantum bit, and the second EPR pair as a qubit representing the classical bit (and so on for the following $\tgate$-gadgets). The last $n$ EPR pairs are the data qubits, and we suppose that at the end of the protocol, the last data qubit is the one that is measured, representing the output.

         \item Normalization constants are omitted when they are clear from context.
\end{enumerate}

Finally, we define \emph{benign} and \emph{non-benign} Pauli attacks, based on their effect on the protocol. %
As we will see,  benign attacks have no effect on the acceptance probability (because all qubits are either traced-out or measured in the computational basis). However, non-benign attacks may influence the acceptance probability.
\begin{definition}
   For a fixed Pauli $P \in \mathbb{P}_{m}$, we call it \emph{benign} if $P \in B_{t,n}$, where $B_{t,n}$ is the set of Paulis acting on $m=2t+n$ qubits, such that the \emph{measured} qubits in the protocol are acted on only by a gate in $\{I, \zgate\}$. Using the  above conventions, this means that $B_{t,n} = \{\{\{I, \zgate\}\otimes\mathbb{P}\}^{\otimes  t} \otimes \mathbb{P}_{n-1} \otimes   \{I, \zgate\}\}$. A Pauli~$P$ is called \emph{non-benign} if at least one \emph{measured} qubit in the protocol is acted on only by a gate in $\{\xgate, \ygate\}$. In analogy to the set of benign Paulis, we denote the set of non-benign Paulis acting on $m=2t+n$ qubits as~$B'_{t,n}$.
\end{definition}

\subsection{In the case of a test run}
\label{sec:soundness-testrun}
Based on the preliminaries of \expref{Section}{sec:simplifying-general} and~\expref{Section}{sec:conventions}, we now bound the probability of acceptance of the test runs, by describing the effect of the attack on the entire system, and considering which attacks are detected by the test runs (essentially, we show in \expref{Lemma}{lem:test-run-acceptance} that all non-benign Pauli attacks are detected by  one of the test runs).

\begin{lemma}
\label{lem:test-run-acceptance}
Consider the \expref{Interactive Proof System}{prot:interactive-proof-EPR} for a circuit $C$ on $n$ qubits and with $t$ $\tgate$-gate gadgets, with attack  \{$E_k$\} (with each $E_k = \sum_{Q} \alpha_{k,Q} Q$), and suppose a test run is applied.
  Let $B'_{t,n}$ be the set of non-benign attacks. Then with the following probability, the verifier rejects%
\begin{equation}
\label{eqn:acceptance-test-benign}
\frac{1}{2}\sum_k\sum_{Q \in B'_{t,n}} \abs{\alpha_{k,Q}}^2.
\end{equation}
\end{lemma}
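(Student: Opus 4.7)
The plan has three layers: (i) reduce the Kraus term $E_k$ to a probabilistic mixture of fixed Pauli attacks via the Pauli twirl, (ii) show that for each fixed non-benign Pauli at least one of the two test runs rejects with probability $1$, and (iii) sum the resulting probability mass.

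For (i), I would exploit the fact that in Interactive Proof System~\ref{prot:interactive-proof-EPR} the verifier's $\egoketbra{\Phi^+}^{\otimes m}$-preparation, combined with her pending computational-/Hadamard-basis measurements and with the random $x$-bit in each $\tgate$-gadget (treated as an EPR half, per the convention of Section~\ref{sec:conventions}), implements a uniformly random secret Pauli mask on every qubit the prover touches. Expanding $E_k = \sum_Q \alpha_Q Q$ and $E_k^* = \sum_{Q'} \alpha_{Q'}^* Q'$ in the Pauli basis and applying Lemma~\ref{lem:Pauli-twirl} with respect to this mask cancels the cross terms $Q \neq Q'$ appearing in Equation~\ref{eqn:attack-on-EPR-I}, leaving the classical mixture $\sum_Q \abs{\alpha_Q}^2 \, Q(\cdot)Q^*$. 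It therefore suffices to prove: for every fixed $Q \in B'_{t,n}$, at least one of the two test runs rejects deterministically; benign $Q \in B_{t,n}$ then cause no rejection because $\zgate$-components on measured wires are invisible to computational-basis measurements while components on unmeasured wires are traced away.

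For (ii), fix $Q \in B'_{t,n}$. By definition $Q$ has an $\xgate$- or $\ygate$-component on some ``measured'' qubit---either the final output wire or the post-swap auxiliary qubit of some $j$-th $\tgate$-gadget---and I would treat the two cases separately:
\begin{itemize}
\item If the offending qubit is the output wire, the $\xgate$-test run applies the identity circuit to $\ket{0}^{\otimes n}$, so the honest decrypted output is $0$; the $\xgate$/$\ygate$-component bit-flips the measurement, forcing $c_{\text{test}} = 1$ and deterministic rejection.
\item If the offending qubit is the $j$-th gadget's auxiliary qubit, I would propagate $Q$ through the honest gadget circuits of Figures~\ref{fig:R-gate-EPR-X-test} and \ref{fig:R-gate-EPR-Z-test} using the Pauli-Clifford commutation identities of Section~\ref{sec:correctness-R-gate} (e.g.\ $\rgate \xgate = \xgate\zgate\pgate\rgate$). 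The result is that the prover's reported $c$ disagrees by $1$ with the value predicted by the $\xgate$-test's consistency check $c = a \oplus d$, or with the analogous check in the $\zgate$-test. Because each Hadamard block (implemented as $\hgate\pgate\hgate\pgate\hgate\pgate\hgate$ in Section~\ref{sec:H-gadget}) internally swaps the roles of the two test runs, the detecting test is either the $\xgate$-test or the $\zgate$-test depending on the parity of Hadamards preceding the $j$-th gadget; in both cases the check fails with probability~$1$.
\end{itemize}

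The main obstacle will be the bookkeeping in the second subcase: one must carefully push $Q$ through all subsequent honest Clifford operations and $\tgate$-gadgets and verify that the $\xgate$-/$\ygate$-component on the relevant measured wire is not converted into a harmless $\zgate$ by an intervening conjugation, and simultaneously confirm that the role-swap inside each Hadamard block precisely accounts for those gadgets whose $\xgate$-test check is insensitive to a given bit-flip. This is exactly where the design choice to include a $\zgate$-test run (explained in Section~\ref{sec:overview-techniques}) earns its keep. Once each non-benign $Q$ is paired with a guaranteed-rejecting test, step (iii) is immediate: the probability that at least one of the two test runs would reject is at least the total weight of non-benign terms, $\sum_{Q \in B'_{t,n}} \abs{\alpha_Q}^2$, as claimed.
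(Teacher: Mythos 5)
Your proposal is correct and follows essentially the same route as the paper: both reduce $E_k$ to a convex combination of Pauli attacks via the Pauli twirl (the paper does this by commuting the input encryption $P$ through $C$ to the decryption key $\tilde P$ and using the bijection $P \mapsto \tilde P$, a step you gloss but which is needed to twirl an attack that acts at the output), and both then observe that each fixed non-benign Pauli is caught deterministically by whichever test run leaves the offending measured wire in the state $\ket{0}$, with the Hadamard parity determining which of the two runs that is. The one place you over-engineer is your self-identified ``main obstacle'': because the reduction of Section~\ref{sec:simplifying-general} has already commuted the entire deviation past the honest unitary, the Pauli $Q$ acts directly on each wire immediately before it is measured, so no propagation through subsequent Clifford operations or $\tgate$-gadgets is required and the worry that an $\xgate$-component might be conjugated into a harmless $\zgate$ simply does not arise.
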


\begin{proof}

As a first step towards proving \expref{Lemma}{lem:test-run-acceptance}, we derive an expression for the system after the application of the honest circuit (and before any attack).
 Let $h_i \in \{0,1\}$ ($i=1 \ldots t$) be a bit that indicates if the auxiliary qubit $i$ is an encrypted version of $\ket{0}$ ($h_i =0$) or $\ket{+}$ ($h_i =1$). We note that, by the properties of the protocol, $h_i=0$ in an $\xgate$-test run if and only if $h_i = 1$ in a $\zgate$-test run.

In the case of an $\xgate$-test, the system that we obtain after the verifier has performed measurements that prepare the encrypted auxiliary qubits and the computation wire is %
\begin{multline}
\label{eqn:x-test-start}
 \sum_{\substack{d_1 \ldots d_t \in \{0,1\} \\ x_1 \ldots x_t \in \{0,1\}}}
 \left(\pgate^{h_1 \cdot x_1}\hgate^{h_1} \xgate^{d_1} \egoketbra{0} \xgate^{d_1} {\hgate}^{h_1} {\pgate^*}^{h_1 \cdot x_1} \otimes
  \xgate^{x_1} \egoketbra{0} \xgate^{x_1}\right) \otimes  \cdots  \\
   \otimes  \left( \pgate^{h_t \cdot x_t} \hgate^{h_t} \xgate^{d_t}  \egoketbra{0} \xgate^{d_t} {\hgate}^{h_t} {\pgate^*}^{h_t \cdot x_t} \otimes  \xgate^{x_t} \egoketbra{0} \xgate^{x_t}\right)
\otimes  {}\\[2ex] \sum_{a_1 \ldots a_n \in \{0,1\}} \xgate^{a_1} \egoketbra{0} \xgate^{a_1} \otimes \cdots \otimes \xgate^{a_n} \egoketbra{0} \xgate^{a_n}\\
\otimes \egoketbra{d_1 \ldots d_t, x_1 \ldots x_t,a_1 \ldots a_n}\,.
\end{multline}
Note that we have appended a $2t+n$ qubit register, which is held by the verifier and that contains a classical basis state representing the key.

In the case of a $\zgate$-test, the system that we start with is %
\begin{multline}
\label{eqn:z-test-start}
 \sum_{\substack{d_1 \ldots d_t \in \{0,1\}\\ x_1 \ldots x_t \in \{0,1\}}}
\left(\pgate^{h_1 \cdot x_1}\hgate^{h_1} \xgate^{d_1} \egoketbra{0} \xgate^{d_1} {\hgate}^{h_1} {\pgate^*}^{h_1 \cdot x_1} \otimes
  \xgate^{x_1} \egoketbra{0} \xgate^{x_1}\right) \otimes  \cdots  \\
   \otimes  \left( \pgate^{h_t \cdot x_t} \hgate^{h_t} \xgate^{d_t}  \egoketbra{0} \xgate^{d_t} {\hgate}^{h_t} {\pgate^*}^{h_t \cdot x_t} \otimes  \xgate^{x_t} \egoketbra{0} \xgate^{x_t}\right)
\otimes {}\\[2ex]\sum_{b_1 \ldots b_n \in \{0,1\}} \zgate^{b_1} \egoketbra{+} \zgate^{b_1} \otimes \cdots \otimes \zgate^{b_n} \egoketbra{+} \zgate^{b_n}\\
\otimes \egoketbra{d_1 \ldots d_t, x_1 \ldots x_t,b_1 \ldots b_n}\,.
\end{multline}

We claim that, replacing the $\pgate$ and $\pgate^*$ gates with the identity in \expeqref{Equation}{eqn:x-test-start} and~\expeqref{Equation}{eqn:z-test-start}, we obtain expressions for the system for each test run, respectively, at the \emph{end} of the application of the honest unitary. This essentially follows by construction; for completeness we review the case of each gate gadget below.

 \paragraph{$\cnot$-gate.}
 As discussed in \expref{Section}{sec:CNOT-gadget}, in both the $\xgate$-test and $\zgate$-test, a $\cnot$ gate, when applied to the computation registers (the last $n$ registers), will have no effect %
 (up to a relabelling of the Paulis, %
 as computed by the key update). %
  Thus a simple change of variable reverts the system to an expression identical to its prior state.
\paragraph{$\hgate$-gate.} As given in \expref{Section}{sec:H-gadget}, the application of the $\hgate$ gate to the computation registers will, up to a relabelling of the Paulis, cause $\ket{0} \mapsto \ket{+}$ and vice-versa. Since an even number of $\hgate$ gates are applied to each computation wire, the starting input state will not be changed by these gates.

 \paragraph{$\tgate$-gate as part of an $\xgate$-test.}
\label{sec:t-gate-x-test}

Suppose a $\tgate$-gate is applied in the $\xgate$-test run, on qubit $j$, using an auxiliary qubit  $i$ $(i = 1 \ldots t)$. Suppose furthermore that qubit $j$ has undergone an \emph{even} number of $\hgate$ gates, so that %
$h_i=0$, and the system that the prover acts upon for the $\tgate$-gate gadget, together with the relevant key register is %
\begin{equation}
\label{eq:start-x-test}
\sum_{d_i,x_i,a_j \in \{0,1\}} \xgate^{d_i}\egoketbra{0}\xgate^{d_i} \otimes \xgate^{x_i}\egoketbra{0}\xgate^{x_i} \otimes \xgate^{a_j}\egoketbra{0}_j\xgate^{a_j} \otimes \egoketbra{d_i, x_i, a_j}\,.
\end{equation}
Applying the $\pgate^x$ and $\cnot$ as in the honest computation has very little effect on the system; it only causes a key update:
\begin{equation}
\sum_{d_i,x_i,a_j \in \{0,1\}} \xgate^{d_i}\egoketbra{0}\xgate^{d_i} \otimes \xgate^{x_i}\egoketbra{0}\xgate^{x_i} \otimes \xgate^{{a_j\oplus {d_i}}}\egoketbra{0}_j\xgate^{a_j \oplus d_i}\otimes \egoketbra{d_i, x_i, a_j \oplus d_i}\,.
\end{equation}

As per our convention, we swap the first and last registers, so that the data wire remains in its position:
\begin{equation}
\sum_{d_i,x_i,a_j \in \{0,1\}}\xgate^{a_j\oplus {d_i}}\egoketbra{0}\xgate^{a_j \oplus d_i}
  \otimes \xgate^{x_i}\egoketbra{0}\xgate^{x_i} \otimes\xgate^{d_i}\egoketbra{0}_j\xgate^{d_i}\otimes \egoketbra{a_j \oplus d_i,  x_i, d_i }\,.
\end{equation}

Next, a  change of variable  shows that the expression is unchanged:
\begin{equation}
\sum_{d_i,x_i,a_j \in \{0,1\}}\xgate^{d_i}\egoketbra{0}\xgate^{d_i}
  \otimes \xgate^{x_i}\egoketbra{0}\xgate^{x_i} \otimes\xgate^{a_j}\egoketbra{0}_j\xgate^{a_j}\otimes \egoketbra{d_i,  x_i, a_j }\,.
\end{equation}

\paragraph{$\tgate$-gate  as part of a $\zgate$-test.}
Suppose a $\tgate$-gate is applied in the $\zgate$ test run, on qubit $j$, using an auxiliary qubit  $i$ $(i = 1 \ldots t)$. Suppose furthermore that qubit $j$ has undergone an \emph{even} number of $\hgate$ gates, so that $h_i =1$, and the system that the prover acts upon for the $\tgate$-gate gadget,  together with the relevant key register is %
\begin{equation}
\label{eq:start-z-test}
\sum_{d_i,x_i,b_j \in \{0,1\}} \pgate^{x_i}\zgate^{d_i}\egoketbra{+}\zgate^{d_i}{\pgate^*}^{x_i}\otimes \xgate^{x_i}\egoketbra{0}\xgate^{x_i} \otimes \zgate^{b_j}\egoketbra{+}_j\zgate^{b_j}  \otimes \egoketbra{d_i, x_i, b_j}\,.
\end{equation}
Applying the $\pgate^x$ and $\cnot$ as in the honest computation  changes the system by canceling out the $\pgate^x_i$ and causing a key update:
\begin{multline}
\sum_{d_i,x_i,b_j \in \{0,1\}} \zgate^{b_j \oplus d_i \oplus x_i}\egoketbra{+}\zgate^{b_j \oplus d_i \oplus x_i}\otimes \xgate^{x_i}\egoketbra{0} \xgate^{x_i} \otimes
\zgate^{b_j} \egoketbra{+}_j\zgate^{b_j}\\  \otimes \egoketbra{b_j \oplus d_i \oplus x_i, x_i, b_j}\,.
\end{multline}

 As per our convention, we swap the first and last registers, so that the data wire remains in the last position:
 \begin{multline}
\sum_{d_i,x_i,b_j \in \{0,1\}} \zgate^{b_j}\egoketbra{+}\zgate^{b_j} \otimes \xgate^{x_i}\egoketbra{0}\xgate^{x_i} \otimes
\zgate^{b_j \oplus d_i \oplus x_i}\egoketbra{+}_j\zgate^{b_j \oplus d_i \oplus x_i}\\\otimes \egoketbra{b_j, x_i, b_j \oplus d_i \oplus x_i}\,.
\end{multline}

Next, a  change of variable shows that the expression is unchanged, except that the $\pgate$ and $\pgate^*$ gates are removed:
 \begin{equation}
\sum_{d_i,x_i,b_j \in \{0,1\}} \zgate^{d_i}\egoketbra{+}\zgate^{d_i} \otimes \xgate^{x_i}\egoketbra{0}\xgate^{x_i} \otimes
\zgate^{b_j}\egoketbra{+}_j\zgate^{b_j}\otimes \egoketbra{d_i, x_i, b_j}\,.
\end{equation}

\paragraph{Final expression before an attack.}
In the case that an \emph{odd} number of $\hgate$ gates have been applied to a data wire, the protocol specifies that we should temporarily swap the roles of the $\xgate$-test and $\zgate$-test runs for the $\tgate$-gates that   immediately follow. In this case, the data qubits and computation will be exactly those considered in the two cases above, but with the roles of the $\xgate$-test and $\zgate$-test exchanged; the same analysis thus applies. For both the $\xgate$-test and $\zgate$-test,  we iteratively apply the various cases above (depending on the circuit). Since it is the case that all computation wires eventually have an \emph{even} number of $\hgate$-gates applied, we can write down an expression for the outcome for the  $\xgate$-test run:
\begin{multline}
\label{eqn:x-test-end}
 \sum_{\substack{d_1 \ldots d_t \in \{0,1\} \\ x_1 \ldots x_t \in \{0,1\}}}
 \left(\hgate^{h_1} \xgate^{d_1} \egoketbra{0} \xgate^{d_1} {\hgate}^{h_1} \otimes
  \xgate^{x_1} \egoketbra{0} \xgate^{x_1}\right) \otimes  \cdots  \\[-2ex]
   \otimes  \left( \hgate^{h_t} \xgate^{d_t}  \egoketbra{0} \xgate^{d_t} {\hgate}^{h_t}  \otimes  \xgate^{x_t} \egoketbra{0} \xgate^{x_t}\right)
\otimes  {}\\[2ex]\sum_{a_1 \ldots a_n \in \{0,1\}} \xgate^{a_1} \egoketbra{0} \xgate^{a_1} \otimes \cdots \otimes \xgate^{a_n} \egoketbra{0} \xgate^{a_n}\\
\otimes \egoketbra{d_1 \ldots d_t, x_1 \ldots x_t,a_1 \ldots a_n}\,.
\end{multline}

In the case of a $\zgate$-test, an expression for the outcome is%
\begin{multline}
\label{eqn:z-test-end}
 \sum_{\substack{d_1 \ldots d_t \in \{0,1\}\\ x_1 \ldots x_t \in \{0,1\}}}
\left(\hgate^{h_1} \xgate^{d_1} \egoketbra{0} \xgate^{d_1} {\hgate}^{h_1} \otimes
  \xgate^{x_1} \egoketbra{0} \xgate^{x_1}\right) \otimes  \cdots  \\[-2ex]
   \otimes  \left(  \hgate^{h_t} \xgate^{d_t}  \egoketbra{0} \xgate^{d_t} {\hgate}^{h_t}  \otimes  \xgate^{x_t} \egoketbra{0} \xgate^{x_t}\right)
\otimes {}\\[2ex]\sum_{b_1 \ldots b_n \in \{0,1\}} \zgate^{b_1} \egoketbra{+} \zgate^{b_1} \otimes \cdots \otimes \zgate^{b_n} \egoketbra{+} \zgate^{b_n}\\
\otimes \egoketbra{d_1 \ldots d_t, x_1 \ldots x_t,b_1 \ldots b_n}\,.
\end{multline}

\paragraph{Applying the attack, decryption and measurement.} Next, we apply the attack for a fixed~$k$,  as given by
\begin{equation}
  E_k = \sum_Q \alpha_{Q} Q\qquad\text{and}\qquad E_k^* = \sum_{Q'} \alpha_{Q'}^* Q'\,,
\end{equation}
followed by the verifier's decryption, trace and measurement.  For the
registers that are traced out, we assume that they are decrypted and
measured. %
 Furthermore, since they are traced out, we can assume that
the quantum auxiliary registers with $h_i =1$ are measured in the
diagonal basis.  We let
\begin{equation}
  Q=P_1\otimes Q_1 \otimes P_2 \otimes Q_2 \otimes \cdots \otimes \P_t
  \otimes Q_t \otimes R_1 \otimes \cdots \otimes R_n\,,
\end{equation}
with $P_i, Q_i, R_j \in \mathbb{P}_1$ ($ i =1 \ldots t,j=1 \ldots n$)
and similarly, let
\begin{equation}
Q'={P'}_1\otimes {Q'}_1 \otimes {P'}_2 \otimes {Q'}_2 \otimes \cdots
\otimes {P'}_t \otimes {Q'}_t \otimes {R'}_1 \ldots {R'}_n\,,
\end{equation}
with ${P'}_i, {Q'}_i, {R'}_j \in \mathbb{P}_1$
($ i =1 \ldots t,j=1 \ldots n$).

 For the $\xgate$-test run, conditioned on outcomes $i_\ell$ (where $h_\ell =0$), and
   $k_1, \ldots ,k_n$ the system becomes%
\begin{multline}
\sum_{\substack{ (h_\ell=1 \wedge \\  i_\ell \in \{0,1\}), \\
j_m \in \{0,1\}}}\sum_{Q, Q'} \alpha_Q \alpha^*_{Q'} \sum_{\substack{d_1 \ldots d_t \in \{0,1\} \\ x_1 \ldots x_t \in \{0,1\}}}\\
 \left( \bra{i_1} \xgate^{d_1} \hgate^{h_1} P_1 \hgate^{h_1} \xgate^{d_1} \egoketbra{0} \xgate^{d_1} {\hgate}^{h_1} {P'}_1 {\hgate}^{h_1} \xgate^{d_1} \ket{i_1} \otimes
\bra{j_1} \xgate^{x_1} Q_1 \xgate^{x_1} \egoketbra{0} \xgate^{x_1} {Q'}_1 \xgate^{x_1} \ket{j_1} \right)\\
 \otimes  \cdots
   \otimes
     \\\left(  \bra{i_t} \xgate^{d_t} \hgate^{h_t}  P_t \hgate^{h_t}   \xgate^{d_t}  \egoketbra{0} \xgate^{d_t} {\hgate}^{h_t}  {P'}_t {\hgate}^{h_t} \xgate^{d_t} \ket{i_t}  \otimes \bra{j_t} \xgate^{x_t}  Q_t \xgate^{x_t}  \egoketbra{0} \xgate^{x_t} {Q'}_t \xgate^{x_t} \ket{j_t} \right)
\otimes  {}\\[2ex]\sum_{a_1 \ldots a_n \in \{0,1\}}\bra{k_1} \xgate^{a_1}R_1 \xgate^{a_1} \egoketbra{0} \xgate^{a_1}{R'}_1 \xgate^{a_1} \ket{k_1} \otimes \cdots \otimes \bra{k_n} \xgate^{a_n}  R_n \xgate^{a_n} \egoketbra{0} \xgate^{a_n} R'_n \xgate^{a_n} \ket{k_n}\,.
\end{multline}

Applying the classical Pauli twirl (\expref{Lemmas}{lem:classical-Pauli-twirl-X} and~\ref{lem:classical-Pauli-twirl-Z}), we obtain that the cross terms of the attack vanish, leaving as expression%
\begin{multline}
\sum_{\substack{(h_\ell=1 \wedge \\  i_\ell \in \{0,1\}), \\
j_m \in \{0,1\}}} \sum_{Q} \abs{\alpha_Q}^2
 \left( \bra{i_1} \hgate^{h_1} P_1 \hgate^{h_1}  \egoketbra{0}  {\hgate}^{h_1} {P}_1 {\hgate}^{h_1}  \ket{i_1} \otimes
\bra{j_1}  Q_1  \egoketbra{0}  {Q}_1  \ket{j_1} \right) \otimes  \cdots  \\[-3ex]
   \otimes  \left(  \bra{i_t}  \hgate^{h_t}  P_t \hgate^{h_t}     \egoketbra{0} \xgate^{d_t} {\hgate}^{h_t}  {P}_t {\hgate}^{h_t}  \ket{i_t}  \otimes \bra{j_t}   Q_t   \egoketbra{0}  {Q}_t  \ket{j_t} \right)
\otimes  {}\\[2ex]\bra{k_1} R_1  \egoketbra{0} {R}_1  \ket{k_1} \otimes \cdots \otimes \bra{k_n}   R_n \egoketbra{0}  R_n  \ket{k_n}\,.
\end{multline}

Recall that in an $\xgate$-test run, the verifier rejects if a measurement result on an auxiliary qubit with $h_i=0$ decrypts to the value~$1$, or if the output decrypts to the value~$1$. Thus, applying the above to all terms in $\{E_k\}$, we get that  the probability that the verifier rejects is given by
\begin{equation}
  \sum_k\sum_{Q \in B_1}\abs{\alpha_{k,Q}}^2\,,
\end{equation}
where $B_1$ is the set of $2t+n$-qubit Paulis with
$P_i \in \{\xgate, \ygate\}$ $(i = 1\ldots t)$ whenever $h_i =0$, or
with $R_n \in \{\xgate, \ygate\}$.

A similar calculation shows that for the $\zgate$-test run,
conditioned on outcomes $i_\ell$ (where $h_\ell =1$), and
   $k_1, \ldots ,k_n$ the system becomes%
\begin{multline}
\sum_{\substack{(h_\ell=0 \wedge \\  i_\ell \in \{0,1\}), \\
j_m \in \{0,1\}}} \sum_{Q} \abs{\alpha_Q}^2
 \left( \bra{i_1} \hgate^{h_1} P_1 \hgate^{h_1}  \egoketbra{0}  {\hgate}^{h_1} {P}_1 {\hgate}^{h_1}  \ket{i_1} \otimes
\bra{j_1}  Q_1  \egoketbra{0}  {Q}_1  \ket{j_1} \right) \otimes  \cdots  \\[-3ex]
   \otimes  \left(  \bra{i_t}  \hgate^{h_t}  P_t \hgate^{h_t}     \egoketbra{0} \xgate^{d_t} {\hgate}^{h_t}  {P}_t {\hgate}^{h_t}  \ket{i_t}  \otimes \bra{j_t}   Q_t   \egoketbra{0}  {Q}_t  \ket{j_t} \right)
\otimes  {}\\[2ex]\bra{k_1} R_1  \egoketbra{+} {R}_1  \ket{k_1} \otimes \cdots \otimes \bra{k_n}   R_n \egoketbra{+}  R_n  \ket{k_n}\,.
\end{multline}

Recall that in a $\zgate$-test run, the verifier rejects if a measurement result on an auxiliary qubit with $h_i=0$ decrypts to the value~$1$.
Thus, applying the above to all terms in $\{E_k\}$, we get that the probability that the verifier rejects is given by
\begin{equation}
  \sum_k\sum_{Q \in B_2}\abs{\alpha_{k,Q}}^2\,,
\end{equation}
where $B_2$ is the set of $2t+n$-qubit Paulis with
$P_i \in \{\xgate, \ygate\}$ $(i = 1\ldots t)$ whenever $h_i =0$.

Since each test is executed  with probability ${1}/{2}$, and since  $h_i = 0$ in the $\xgate$-test run if and only if $h_i =1$ in the $\zgate$-test run, we obtain that the
probability that the verifier rejects is%
\begin{equation}
\frac{1}{2} \sum_k\sum_{Q \in B'_{t,n}} \abs{\alpha_{k,Q}}^2\,. \qedhere
\end{equation}
 \end{proof}

\subsection{In the case of a computation run}
\label{sec:soundness-computation run}

 Again using the preliminaries of \expref{Section}{sec:simplifying-general} and \expref{Section}{sec:conventions}, we now analyze soundness in the case of a computation run. First, we determine the effect of a bit flip on the measured qubit in the $\tgate$-gate gadget (\expref{Section}{sec:effect-X-gate-output}), then we do an analysis for the case that the computation run consists in
a single $\tgate$-gate gadget (\expref{Section}{sec:soundness-T-gate-Paulis}).
  This is extended to full generality in \expref{Section}{sec:soundness-computation-general}, where we give a lower bound (as a function of the attack and of the underlying computation) on the probability that the verifier rejects in a computation run (see \expref{Lemma}{lem:acceptance-computation}).

\subsubsection{Effect of a bit flip on a measured qubit}
\label{sec:effect-X-gate-output}
In \expref{Lemma}{lem:error-t-gate}, we establish the effect of a bit flip on the measured qubit in the $\tgate$-gate gadget.

\begin{lemma} \label{lem:error-t-gate}
The error induced by an $\xgate$-gate on the measured qubit in the $\tgate$-gate gadget in \expref{Figure}{fig:R-gate-EPR-computation} is to  introduce an extra $\xgate \zgate \pgate$ on the output.
\end{lemma}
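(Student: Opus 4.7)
My plan is to track the $\xgate$ attack through the circuit of Figure~\ref{fig:R-gate-EPR-computation} and compute its net effect on the plaintext output, relying on the circuit identities from Section~\ref{sec:correctness-R-gate} (in particular the non-trivial commutation $\tgate\xgate = \xgate\zgate\pgate\tgate$, up to global phase).

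First, I would observe that the $\xgate$ sits on the target wire of the gadget's $\cnot$, so it commutes past the $\cnot$; equivalently, the attack can be moved to the input of the data register, replacing $\xgate^a\zgate^b\ket{\psi}$ by $\xgate^{a\oplus 1}\zgate^b\ket{\psi}$. However, the verifier continues to track the encryption key using the original $a$, so after the gadget the wire holds the ``honest'' output formula evaluated with the effective key $a_{\text{eff}} = a\oplus 1$, while the verifier decrypts with the original key. In parallel, flipping the input $\xgate$-bit flips the measurement outcome, so the reported $c$ differs by one from the underlying teleportation outcome; through the relation $y = a\oplus c\oplus x$ of the EPR-based protocol (Figure~\ref{fig:R-gate-EPR-computation}), this also shifts the effective value of $y$ by one while $x$ stays fixed.

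Second, I would use the analysis culminating in Figure~\ref{fig:circuit-proof:6}: the honest output is $\xgate^{a\oplus c}\zgate^{f(c)}\tgate\ket{\psi}$ with $f(c) = (a\oplus c)(d\oplus y)\oplus a\oplus b\oplus c\oplus e\oplus y$. Plugging in the effective input $a_{\text{eff}}=a\oplus 1$ but keeping the prover's actual correction $\pgate^{a\oplus c\oplus y}$ (which is ``wrong'' relative to $a_{\text{eff}}$ by exactly one power of $\pgate$), I can re-express the actual wire state as an extra Clifford-plus-$\pgate$ factor times the honest wire state. The identities $\pgate\xgate = \xgate\zgate\pgate$, $\tgate\xgate = \xgate\zgate\pgate\tgate$, $\pgate^{-1} = \zgate\pgate$ (up to global phase) from Section~\ref{sec:correctness-R-gate} let me pull the stray $\pgate$ to the right of all Paulis, so that the actual wire state has the form $\xgate^{?}\zgate^{?}\pgate\tgate\ket{\psi}$.

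Third, I would compute $f(c)\oplus f(c\oplus 1) = d\oplus y\oplus 1$ and use it to simplify the $\zgate$ exponent when comparing the actual state against the verifier's tracked key $(a\oplus c, f(c))$. After decrypting the actual wire state with this tracked key, the $\xgate^{a\oplus c}$ and $\zgate^{f(c)}$ cancel against their counterparts in the actual state, and what remains (after using $y = a\oplus c\oplus x$ to eliminate $y$) is precisely $\zgate^{a\oplus c\oplus d\oplus x}\pgate$ acting on $\tgate\ket{\psi}$. The main obstacle I anticipate is the bookkeeping: one must be careful to distinguish the reported from the underlying value of $c$, to use the right version of $y$ at each step, and to absorb the various signs that arise from commuting $\zgate$ past $\xgate$ into an overall global phase; a mismatch of even a single $\oplus 1$ anywhere would change an $\xgate$-looking leftover into the clean $\zgate^{\cdots}\pgate$ form claimed.
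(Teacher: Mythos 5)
Your proposal is correct and follows essentially the same route as the paper: the paper's proof likewise reduces the $\xgate$ attack to a one-bit shift of the $\pgate$-correction exponent (the prover receives $\pgate^{a\oplus c\oplus y\oplus 1}$ instead of $\pgate^{a\oplus c\oplus y}$), writes $\pgate^{s\oplus 1}=\pgate\zgate^{s}\pgate^{s}$, commutes the stray $\pgate\zgate^{s}$ through the Pauli frame using the identities of Section~\ref{sec:correctness-R-gate} to leave an extra $\zgate^{d\oplus y}\pgate$ in front of $\tgate$, and finally substitutes $y=a\oplus c\oplus x$. The only cosmetic difference is that you first commute the attack back through the $\cnot$ and track an ``effective key'' $a\oplus 1$ together with a flipped outcome and a flipped $y$---three shifts that are really a single effect---whereas the paper works directly with the flipped report $c$ and a fixed $y$, which keeps the bookkeeping (and the off-by-one risk you rightly flag) to a minimum.
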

\begin{proof}
An $\xgate$-gate on the measured qubit in \expref{Figure}{fig:R-gate-computation} will cause the bottom wire to receive the correction $\pgate^{a\oplus c \oplus y \oplus 1}$ (instead of $\pgate^{a \oplus c \oplus y}$). Since $\pgate^{a \oplus c \oplus y \oplus 1} =\pgate \zgate^{a \oplus c \oplus y} \pgate^{a \oplus c \oplus y }$,
the following shows how we can use 
 we use and revise the calculation from \expeqref{Equation}{eqn:circuit-3}--\expeqref{Equation}{eqn:circuit-35}.%
\begin{align} \label{Eqn:bit-flipstart}
 \pgate^{a \oplus c \oplus y \oplus 1 } \xgate^d  \zgate^e \pgate^y \tgate \xgate^{a\oplus c}\zgate^b
&= \pgate \zgate^{a \oplus c \oplus y} \pgate^{a \oplus c \oplus y } \xgate^d  \zgate^e \pgate^y \tgate \xgate^{a\oplus c}\zgate^b \\
& = \pgate \zgate^{a \oplus c \oplus y} \xgate^{a\oplus c \oplus d}  \zgate^{(a \oplus c \oplus d)\cdot (d \oplus y)   \oplus a \oplus b \oplus c \oplus d \oplus e  \oplus y   } \tgate \\
&= \xgate^{a\oplus c \oplus d}  \zgate^{(a \oplus c \oplus d)\cdot (d \oplus y)   \oplus a \oplus b \oplus c \oplus d \oplus e  \oplus y   } \zgate^{a \oplus c \oplus y} \zgate^{a\oplus c \oplus d}   \pgate \tgate\\ \label{eq:effect-bit-flip-last}
&= \xgate^{a\oplus c \oplus d}  \zgate^{(a \oplus c \oplus d)\cdot (d \oplus y)   \oplus a \oplus b \oplus c \oplus e }  \pgate \tgate\,. %
\end{align}
We note furthermore that the $\xgate$-gate on the measured qubit causes the Pauli key to be updated as $c\leftarrow c \oplus 1$. Starting with the right-hand side of
\expeqref{Equation} {eq:effect-bit-flip-last}, we thus obtain%
\begin{equation}\label{Eqn:bit-flipend}
\xgate^{a\oplus c \oplus d \oplus 1}  \zgate^{(a \oplus c \oplus d \oplus 1)\cdot (d \oplus y)   \oplus a \oplus b \oplus c \oplus e \oplus 1 }  \pgate \tgate
= \xgate^{a\oplus c \oplus d }  \zgate^{(a \oplus c \oplus d )\cdot (d \oplus y)   \oplus a \oplus b \oplus c \oplus d \oplus e \oplus y} (\xgate \zgate \pgate) \tgate\,.
\end{equation}
Comparing with \expeqref{Equation} {eqn:circuit-35}, we thus see that the effect is to apply  $\xgate \zgate \pgate$.
\end{proof}

\subsubsection{The \texorpdfstring{$\tgate$}{T}-gate protocol under attack}
\label{sec:soundness-T-gate-Paulis}

In this section, we  analyze the effect of an attack in a single
$\tgate$-gate gadget: we show that the effect of the gadget on an encrypted qubit is to apply a $\tgate$ gate on the plaintext, while maintaining the encryption. Furthermore, if the auxiliary qubit undergoes an attack $Q_1 \in \{\xgate, \ygate\}$, then an error $E=\xgate\zgate\pgate$ (by \expref{Lemma}{lem:error-t-gate}) will be applied to the computation wire. The Pauli twirl plays again an important part in simplifying a general attack to a convex combination of Pauli attacks.
This analysis (\expref{Lemma}{lem:single-t-gate}) is used as the inductive step in the proof of the general case as analysed in \expref{Section}{sec:soundness-computation-general} (see \expref{Lemma}{lem:acceptance-computation-induction}).

\begin{lemma}
\label{lem:single-t-gate}
In \expref{Interactive Proof System}{prot:interactive-proof-EPR}, consider a circuit consisting in a
single $\tgate$-gate, applied to a data wire which is initially an encryption of $\rho$. Consider a term in the attack \{$E_k$\}, given by $Q=P_1\otimes Q_1$, $Q' = {P'}_1 \otimes {Q'}_1$, acting on the auxiliary qubits ($P_1, Q_1, {P'}_1, {Q'}_1 \in \mathbb{P}_1$).  Then in the case $P_1\otimes Q_1 \neq P'_1 \otimes Q'_1$, this term simplifies to~$0$, whereas otherwise the effect is to apply $E^{\delta_{P_1}} \tgate$ on the data, while maintaining a uniform encryption, with the key held by the verifier. (Here,  $Q \in \mathbb{P}_1$, and $\delta_{Q} = 0$ if $Q \in \{I, \zgate\} $ and $\delta_{Q} = 1$ otherwise.)
\end{lemma}

\begin{proof}
 Suppose the honest circuit of the prover is applied, followed by an attack and the coherent correction of the verifier, which is then followed by a measurement of auxiliary qubits. We consider the effect of a Pauli attack $Q=P_1\otimes Q_1$, $Q' = {P'}_1 \otimes {Q'}_1$, acting on the auxiliary qubits ($P_1, Q_1, {P'}_1, {Q'}_1 \in \mathbb{P}_1$). 
  (Strictly 
  speaking this is not a full attack, but instead, ignoring the coefficient, it corresponds to one term in the expansion of the full attack as given by $\{E_k\}$%
  .)
When a bit flip occurs on the top wire (\ie,~when $P_1 \in \{\xgate, \ygate\}$) in \expref{Figure}{fig:R-gate-EPR-computation}, then the outcome undergoes an error $E=\xgate\zgate\pgate$ as given by \expref{Lemma}{lem:error-t-gate}. The above is summarized in  \expref{Figure}{fig:R-gate-EPR-computation-attack}.

\begin{figure}[H]
\centerline{
 \Qcircuit @C=1em @R=1em  {
&&&&&&&\lstick{\xgate^a \zgate^b\ket{\psi}} & \qw & \qw & \qw & \targ & \gate{P_1}  &  \meter & \control \cw \cwx[4]  &    &   \\
\lstick{\text{Prover}}&&&&& & && & \control  & \cw    & \cw
&   \cgate{Q_1} & \control \cw  & &&\\
&&&&&& & & && & \ctrl{-2} &\qw    & \gate{\pgate^x} \cwx &  \qw  &
\qw& \rstick{ \hspace{-0.5cm}\xgate^{a'}  \zgate^{b'} E^{\delta_{P_1}} \tgate   \ket{\psi}}\\
&&&&\dw &\dw&\dw& \dw & \dw & \dw &\dw & \dw
 &  \dw & \dw  & \dw &\dw&\\
&& &&&& &\lstick{x \in_R \{0,1\}} &\cw & \control \cw \cwx[-3] & &
&   & & \control &   \cw  & \rstick{c} \\
\lstick{\text{Verifier}}&&& &&&&\lstick{\ket{0}}    & \gate{\hgate}
&\ctrl{1} & \qw \qwx[-3]
 & &  &\mbox{($ d \in_R \{0,1\}, y = a \oplus c \oplus x $)}  & & &&\\
&& &&&& &\lstick{\ket{0}}    & \qw & \targ & \gate{\rgate}
 & \gate{\pgate^{y \oplus d}}  & \gate{\zgate^{d}}& \gate{\hgate} &  \meter &  \cw  & \cw
   & \rstick{e}
    \gategroup{6}{6}{7}{10}{1.4em}{--}
     \gategroup{1}{2}{3}{2}{0.7em}{\{}
 \gategroup{5}{2}{7}{2}{0.7em}{\{}
 }
 }
  \caption{\label{fig:R-gate-EPR-computation-attack} An attack $P_1 \otimes Q_1$ on a single $\tgate$-gate gadget (computation run). Here, $a'= a\oplus c$ and $b'=(a \oplus c) \cdot (d \oplus y) \oplus a \oplus  b \oplus c \oplus e \oplus y $\,.%
 }
\end{figure}
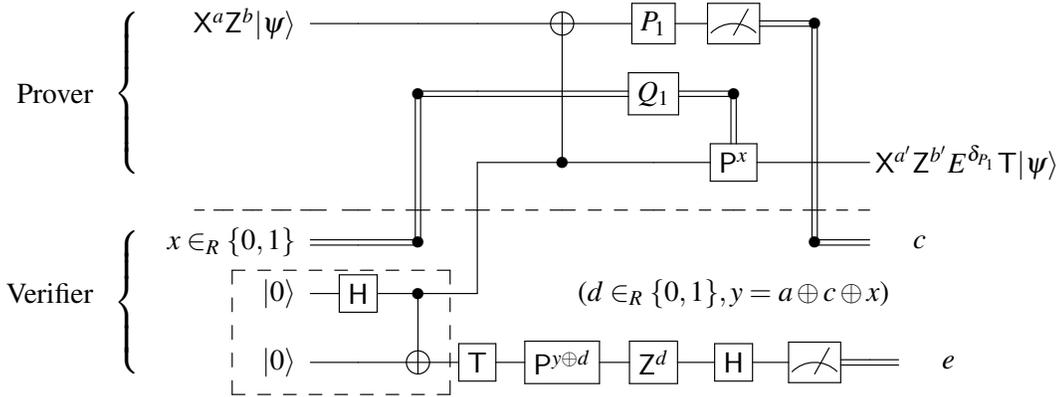

In order to give a mathematical expression for \expref{Figure}{fig:R-gate-EPR-computation-attack}, we use the circuit identity in \expref{Figure}{fig:circuit-proof:5} (according to which the measurement outcome $c$ undergoes an encryption and decryption with~$d$).  First we consider the case
$\delta_{P_1} = \delta_{P'_1}$. Applying  $ y =a \oplus c \oplus x$, and considering the trace of the auxiliary qubits, we get the following
 expression %
\begin{multline} \label{eqn:t-gate-first}
 \sum_{i,j \in \{0,1\}}\sum_{a,b,c,d,e,x \in \{0,1\}}  \bra{i}  \xgate^d P_1 \xgate^d \ketbra{c}{c} \xgate^d {{P'}_1} \xgate^d  \ket{i}  \otimes
  \bra{j} \xgate^x Q_1  \xgate^x \egoketbra{0} \xgate^x {{Q'}_1} \xgate^x \ket{j} \otimes
 \\\xgate^{a\oplus c}  \zgate^{(a \oplus c) \cdot (d \oplus x) \oplus a \oplus  b \oplus c \oplus e \oplus x}  E^{\delta_{P_1}} \tgate \rho \tgate^*  {E^*}^{\delta_{P_1}}
    \zgate^{(a \oplus c) \cdot (d \oplus x) \oplus a \oplus  b \oplus c \oplus e \oplus x}  \xgate^{a\oplus c} \\[1ex]
 \otimes \ket{a \oplus c, (a \oplus c) \cdot (d \oplus x) \oplus a \oplus b \oplus c \oplus e \oplus x  } \bra{a \oplus c,(a \oplus c) \cdot (d \oplus x) \oplus a \oplus b \oplus c \oplus e \oplus x}\,.
 \end{multline}
 Note that above, we have included a key register for the computation wire, and have considered that the first register is traced out, thus since $\delta_{P_1} = \delta_{P'_1}$, cross terms of the form
 \begin{equation}
   \bra{i}  \xgate^d P_1 \xgate^d \ketbra{c}{c'} \xgate^d {{P'}_1} \xgate^d  \ket{i}
 \end{equation}
 with $c \neq c'$ vanish and are therefore excluded.  Furthermore, we
 consider without loss of generality that the second register is
 decrypted with $\xgate^x$ before being traced out.

 Since $a$, $b$, and $e$ appear only on the data register, and by a change of variable,  we can
rewrite  %
this as
\begin{multline} \label{eqn:t-gate-first-ab}
 \sum_{i,j \in \{0,1\}} \sum_{c,d,x \in \{0,1\}}  \bra{i}  \xgate^d P_1 \xgate^d \ketbra{c}{c} \xgate^d {{P'}_1} \xgate^d  \ket{i}  \otimes
  \bra{j} \xgate^x  Q_1
 \xgate^x \egoketbra{0} \xgate^x {{Q'}_1}\xgate^x \ket{j}  \otimes
 \\ \sum_{a,b\in \{0,1\}}  \xgate^{a}  \zgate^{b}  E^{\delta_{P_1}} \tgate \rho \tgate^*  {E^*}^{\delta_{P_1}}
    \zgate^{b}  \xgate^{a}
 \otimes \ket{a,b} \bra{a,b}\,.
 \end{multline}
Applying the classical Pauli twirl (\expref{Lemma}{lem:classical-Pauli-twirl-X}), we get that for the case  under consideration ($\delta_{P_1} = \delta_{P'_1}$), the system is $0$ if $P_1 \otimes Q_1 \neq P'_1 \otimes Q'_1$, and otherwise is%
\begin{equation} \label{eqn:simplified-t-gate}
\sum_{a,b\in \{0,1\}}  \xgate^{a}  \zgate^{b}  E^{\delta_{P_1}} \tgate \rho \tgate^*  {E^*}^{\delta_{P_1}}
    \zgate^{b}  \xgate^{a}
 \otimes \ket{a,b} \bra{a,b}\,.
\end{equation}

 Next we consider the case $\delta_{P_1} \neq \delta_{P'_1}$.
  Again applying  $ y =a \oplus c \oplus x$, letting $c' = c \oplus 1$, and considering the trace of the auxiliary qubits, we get the following
 expression%
\begin{multline} \label{eqn:t-gate-first-ab2}
\sum_{i,j \in \{0,1\}}\sum_{a,b,c,d,e,x \in \{0,1\}}   \bra{i} \xgate^d P_1 \xgate^d \ketbra{c}{c'} \xgate^d {{P'}_1} \xgate^d \ket{i}   \otimes
\bra{j} \xgate^x Q_1
 \xgate^x \egoketbra{0} \xgate^x {{Q'}_1} \xgate^x \ket{j} \otimes
 \\ \xi  \xgate^{a\oplus c}  \zgate^{(a \oplus c) \cdot (d \oplus x) \oplus a \oplus  b \oplus c \oplus e \oplus x}  E^{\delta_{P_1}} \tgate \rho \tgate^*  {E^*}^{\delta_{P'_1} }
    \zgate^{(a \oplus c') \cdot (d \oplus x) \oplus a \oplus  b \oplus c' \oplus e \oplus x}  \xgate^{a\oplus c'} \\[1ex]
 \otimes \ket{a \oplus c, (a \oplus c) \cdot (d \oplus x) \oplus a \oplus b \oplus c \oplus e \oplus x  } \bra{a \oplus c' ,(a \oplus c') \cdot (d \oplus x) \oplus a \oplus b \oplus c' \oplus e \oplus x}\,.
 \end{multline}
Note that above, we have included a key register for the computation wire, and have considered that the first register is traced out. %
 Thus terms of the form
$\bra{i}  \xgate^d P_1 \xgate^d \ketbra{c}{c} \xgate^d {{P'}_1} \xgate^d  \ket{i}$ vanish and are therefore excluded.
We again consider without loss of generality that the second register is decrypted with $\xgate^x$ before being traced out. Here, $\xi$ represents a phase that depends on the key register and is due to the fact that $c \neq c'$.

Since $\delta_{P_1} =  \delta_{P'_1} \oplus 1$, since  $a$, $b$, and $e$ appear only on the data register, and by the following change of variable,
\begin{align}
 a &\leftarrow a \oplus c\\
 b & \leftarrow  (a \oplus c) \cdot (d \oplus x) \oplus a \oplus  b \oplus c \oplus e \oplus x \\
 e & \leftarrow  (a \oplus c') \cdot (d \oplus x) \oplus a \oplus b \oplus c' \oplus e \oplus x\,,
\end{align}
we can
rewrite  %
this as
\begin{multline} \label{eqn:t-gate-first-ab3}
\sum_{i,j \in \{0,1\}}\sum_{c,d,x \in \{0,1\}}   \bra{i} \xgate^d P_1 \xgate^d \ketbra{c}{c} \xgate^d (\xgate {{P'}_1}) \xgate^d \ket{i}   \otimes
\bra{j} \xgate^x Q_1  \xgate^x \egoketbra{0} \xgate^x {{Q'}_1} \xgate^x \ket{j} \otimes
 \\  \sum_{a,b,e \in \{0,1\}} \xi  \xgate^{a}\zgate^{b}  E^{\delta_{P_1}} \tgate \rho \tgate^*  {E^*}^{\delta_{P_1}}
    \zgate^{e}  \xgate^{a \oplus 1}
 \otimes \ket{a,b} \bra{a \oplus 1,e }\,.
 \end{multline}

Next, 
we
 apply 
 the 
 Pauli twirl to the first register, which yields $0$ unless $P_1 = \xgate P'_1$, in which case we also get~$0$,
since $ \bra{i} P_1  \ketbra{c}{c}  (\xgate {P_1})  \ket{i}=0 $.

Thus, we conclude that the case $P_1\otimes Q_1 \neq P'_1 \otimes Q'_1$ evaluates to~$0$, whereas otherwise (by \expeqref{Equation} {eqn:simplified-t-gate}) the effect of the $\tgate$-gate gadget is to apply $E^{\delta_{P_1}} \tgate$ on the data, while maintaining a uniform encryption, with the key held by the verifier.
\end{proof}

\subsubsection{General analysis for a computation run}
\label{sec:soundness-computation-general}

In this section, we  bound the acceptance probability in the case of a computation run. %
  This result is presented in \expref{Lemma}{lem:acceptance-computation}, the proof of which depends on the following Lemma%

\begin{lemma}\label{lem:acceptance-computation-induction}
Consider \expref{Interactive Proof System}{prot:interactive-proof-EPR} for a circuit $C$ on $n$ qubits and with $t$ $\tgate$-gate gadgets, with attack $\{E_k\}$ (with each $E_k = \sum_{Q} \alpha_{k,Q} Q$).
Suppose the target circuit  $C$ is decomposed as
\begin{equation}
  C =  \tgate_{\ell_t} C_t \ldots \tgate_{\ell_2} C_2 \tgate_{\ell_1} C_1\,,
\end{equation}
where each $C_i$ is a Clifford group circuit and
$\ell_i \in \{1\ldots n\}$ indicates that the $i^{\text{th}}$
$\tgate$-gate acts on qubit $\ell_i$ $ (i=1\ldots t)$.
Let
\begin{equation}
  Q=P_1\otimes Q_1 \otimes P_2 \otimes Q_2 \otimes \cdots \otimes \P_t \otimes Q_t\,,
\end{equation}
with $P_i, Q_i \in \mathbb{P}_1$ ($ i =1 \ldots t$) and similarly, let
\begin{equation}
  Q'={P'}_1\otimes {Q'}_1 \otimes {P'}_2 \otimes {Q'}_2 \otimes \cdots
  \otimes {P'}_t \otimes {Q'}_t\,,
\end{equation}
with ${P'}_i, {Q'}_i \in \mathbb{P}_1$ ($ i =1 \ldots t$).  Let $E$ be
the error on the output induced by an $\xgate$ on the top wire in
\expref{Figure}{fig:R-gate-EPR-computation}.  For a
Pauli~$P \in \mathbb{P}_1$, let $\delta_{P} = 0$ if
$P \in \{I, \zgate\} $ and $\delta_{P} = 1$ otherwise.  Then we claim
that after the honest computation, attack, the verifier's conditional
corrections and tracing out of the auxiliary system, the term of the
system corresponding to the attack $(Q, Q')$ is~$0$ if
\begin{equation}
  P_1\otimes Q_1 \otimes P_2 \otimes Q_2 \otimes \cdots \otimes \P_t
\otimes Q_t \neq P'_1\otimes Q'_1 \otimes P'_2 \otimes Q'_2 \otimes
\cdots \otimes \P'_t \otimes Q'_t\,,
\end{equation}
and otherwise is%
\begin{multline}
\label{eqn:computation-run-claim}
\sum_{\substack{a_1,b_1, \ldots a_n, b_n \\ \in \{0,1\}}} (\xgate^{a_1}\zgate^{b_1} \otimes \cdots \otimes \xgate^{a_n}\zgate^{b_n})  E^{\delta_{P_t}}_{\ell_t}    \tgate_{\ell_t} C_t \ldots  E^{\delta_{P_2}}{\ell_2}    \tgate_{\ell_2} C_2  E^{\delta_{P_1}}_{\ell_1}  \tgate_{\ell_1} C_1 \ket{0^n}\\
 \bra{0^n} {C_1}^* \tgate^*_{\ell_1}   {E^*}^{\delta_{P_1}}_{\ell_1} {C_2}^*   \tgate^*_{\ell_2}   {E^*}^{\delta_{P_2}}_{\ell_2} \ldots   {C_t}^* \tgate^*_{\ell_t}   {E^*}^{\delta_{P_t}}_{\ell_t}  (\zgate^{b_1} \xgate^{a_1}\otimes \cdots \zgate^{b_n} \xgate^{a_n})\\[2ex]
 \otimes \egoketbra{a_1, b_1, \ldots ,a_n, b_n}\,.
\end{multline}
\end{lemma}

Note that in the statement of \expref{Lemma}{lem:acceptance-computation-induction} above, we have that $Q$ and $Q'$ are applied only to auxiliary qubits---we have thus omitted the ``$R$'' portion of the attack. %
 Furthermore, we refer the reader to \expref{Figure}{fig:R-gate-EPR-computation-attack} for the $t=1$ case.

\begin{proof}
\expref{Lemma}{lem:acceptance-computation-induction} is proven by induction on $t$. The base case $t=0$ is verified, since the initial system~is%
\begin{equation}
\sum_{\substack{a_1,b_1, \ldots ,a_n, b_n \\ \in \{0,1\}}}  \xgate^{a_1}\zgate^{b_1} \otimes \cdots \otimes \xgate^{a_n}\zgate^{b_n} \egoketbra{0^n}  \zgate^{b_1} \xgate^{a_1}\otimes \cdots \otimes \zgate^{b_n} \xgate^{a_n} \otimes \egoketbra{a_1, b_1, \ldots ,a_n, b_n}\,.
\end{equation}

Next, suppose \expeqref{Equation} {eqn:computation-run-claim} holds for $t=k$ and consider $t=k+1$.
Now, because each Pauli acts on a different subsystem,  an attack
\begin{equation}
  Q=P_1\otimes Q_1 \otimes P_2 \otimes Q_2 \otimes \cdots \otimes \P_k \otimes Q_k \otimes \P_{k+1} \otimes Q_{k+1}
\end{equation}
can be decomposed as an attack on the first $2k$ auxiliary qubits,
followed by an attack of the last two qubits (and similarly for
$Q'$). Suppose $P_i \otimes Q_i = P'_i \otimes Q'_i$ $(i=1\ldots
k$). By the hypothesis, the outcome will be the application of the
computation corresponding to the gadgets for $C_{k+1}$ and
$T_{\ell_{k+1}}$, followed by attack
\begin{equation}
  (Q_{k+1} \otimes P_{k+1}, Q'_{k+1} \otimes P'_{k+1})\,,
\end{equation}
on an encrypted system~$\rho$ given by:
\begin{multline}
\rho = \sum_{\substack{a_1,b_1, \ldots a_n, b_n \\ \in \{0,1\}}} (\xgate^{a_1}\zgate^{b_1} \otimes \cdots \otimes \xgate^{a_n}\zgate^{b_n})  E^{\delta_{P_k}}_{\ell_k}    \tgate_{\ell_k} C_k \ldots  E^{\delta_{P_2}}{\ell_2}    \tgate_{\ell_2} C_2  E^{\delta_{P_1}}_{\ell_1}  \tgate_{\ell_1} C_1 \ket{0^n}\\
 \bra{0^n} C_1^* \tgate^*_{\ell_1}   {E^*}^{\delta_{P_1}}_{\ell_1} C_2^*   \tgate^*_{\ell_2}   {E^*}^{\delta_{P_2}}_{\ell_2} \ldots   C_k^* \tgate^*_{\ell_k}   {E^*}^{\delta_{P_t}}_{\ell_k}  (\zgate^{b_1} \xgate^{a_1}\otimes \cdots \zgate^{b_n} \xgate^{a_n})
 \otimes {}\\[2ex] \egoketbra{a_1, b_1, \ldots ,a_n, b_n}\,.
\end{multline}

First, the prover applies a Clifford circuit $C_{k+1}$. After a key update as given in \expref{Section}{sec:Interactive-Proof-aux}, a change of variable will lead to~$\rho'$:
\begin{multline}
\rho' = \sum_{\substack{a_1,b_1, \ldots a_n, b_n \\ \in \{0,1\}}} (\xgate^{a_1}\zgate^{b_1} \otimes \cdots \otimes \xgate^{a_n}\zgate^{b_n}) {C_{k+1}} E^{\delta_{P_k}}_{\ell_k}    \tgate_{\ell_k} C_k \ldots  E^{\delta_{P_2}}{\ell_2}    \tgate_{\ell_2} C_2  E^{\delta_{P_1}}_{\ell_1}  \tgate_{\ell_1} C_1 \ket{0^n}\\[-1ex]
 \bra{0^n} {C_1}^* \tgate^*_{\ell_1}   {E^*}^{\delta_{P_1}}_{\ell_1} C_2^*   \tgate^*_{\ell_2}   {E^*}^{\delta_{P_2}}_{\ell_2} \ldots   C_k^* \tgate^*_{\ell_k}   {E^*}^{\delta_{P_t}}_{\ell_k} C_{k+1}^* (\zgate^{b_1} \xgate^{a_1}\otimes \cdots \zgate^{b_n} \xgate^{a_n})
 \otimes {}\\[2ex]
 \egoketbra{a_1, b_1, \ldots ,a_n, b_n}\,.
\end{multline}

Next, the auxiliary qubits for  $\tgate_{\ell_{k+1}}$ are used; we apply \expref{Lemma}{lem:single-t-gate} (only a single qubit, $\ell_{k+1}$ is affected by this part of the computation). Thus, if
\begin{equation}
  Q_{k+1} \otimes P_{k+1} \neq Q'_{k+1} \otimes P'_{k+1}\,,
\end{equation}
the term vanishes, and otherwise it becomes%
\begin{multline}
\sum_{\substack{a_1,b_1, \ldots a_n, b_n \\ \in \{0,1\}}} (\xgate^{a_1}\zgate^{b_1} \otimes \cdots \otimes \xgate^{a_n}\zgate^{b_n})  E^{\delta_{P_{k+1}}}_{\ell_{k+1}}    \tgate_{\ell_{k+1}} {C_{k+1}} E^{\delta_{P_k}}_{\ell_k}    \tgate_{\ell_k} C_k \ldots  E^{\delta_{P_2}}{\ell_2}    \tgate_{\ell_2} C_2  E^{\delta_{P_1}}_{\ell_1}  \tgate_{\ell_1} C_1 \ket{0^n}\\
 \bra{0^n} C_1^* \tgate^*_{\ell_1}   {E^*}^{\delta_{P_1}}_{\ell_1} C_2^*   \tgate^*_{\ell_2}   {E^*}^{\delta_{P_2}}_{\ell_2} \ldots   C_k^* \tgate^*_{\ell_k}   {E^*}^{\delta_{P_t}}_{\ell_k} C_{k+1}^*  {\tgate^*}_{\ell_{k+1}}  {E^*}^{\delta_{P_{k+1}}}_{\ell_{k+1}}
  (\zgate^{b_1} \xgate^{a_1}\otimes \cdots \zgate^{b_n} \xgate^{a_n})
 \otimes {}\\[2ex]
 \egoketbra{a_1, b_1, \ldots ,a_n, b_n}\,.
\end{multline}
By the hypothesis, the term also vanishes if
\begin{equation}
  P_i\otimes Q_i \neq P'_i \otimes Q'_i\qquad (i=1\ldots k)\,.
\end{equation}
Thus, \expref{Lemma}{lem:acceptance-computation-induction} holds for
the case $t= k+1$ and by induction,
\expref{Lemma}{lem:acceptance-computation-induction} holds in general.
\end{proof}

We now state and prove the main result of this Section.

\begin{lemma}\label{lem:acceptance-computation}
Consider the \expref{Interactive Proof System}{prot:interactive-proof-EPR} for a circuit $C$ on $n$ qubits and with $t$ $\tgate$-gate gadgets, with attack $\{E_k\}$ (with each $E_k = \sum_{Q} \alpha_{k,Q} Q$).
   Let $B_{t,n}$ be the set of benign attacks. Then the probability  that the verifier rejects for a computation run is at least:
\begin{equation}
\label{eqn:acceptance-computation-benign}
(1-p) \sum_k \sum_{Q \in B_{t,n}} \abs{\alpha_{k,Q}}^2\,,
\end{equation}
where
\begin{equation}
  p= \norm{(\ket{0}\bra{0}\otimes \mathbb{I}_{n-1}) C \ket{0^n}}^2
\end{equation}
is the probability that we observe ``0'' as a results of a
computational basis measurement of the $n^{\text{th}}$ output qubit,
obtained by evaluating~$C$ on input~$\ket{0^n}$.
\end{lemma}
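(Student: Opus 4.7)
The plan is to extend the Pauli-twirl argument developed in Section~\ref{sec:soundness-T-gate-Paulis} for a single $\tgate$-gate gadget on a single-qubit input to the full protocol with $t$ gadgets on $n$ qubits. I will start from Equation~\ref{eqn:attack-on-EPR-I}, the joint state of prover and verifier after the honest circuit $C$ is applied followed by the attack $E_k$, and postpone all of the verifier's measurements (the half-EPR pairs yielding the encryption keys $a,b$, the auxiliary outcomes $d,e$, and the classical bits $x$ inside each $\tgate$-gadget) together with her conditional corrections coherently until \emph{after} $E_k$. This is legitimate precisely because of the circuit identity captured by Figure~\ref{fig:circuit-proof:5}, which allows each adaptive $\pgate^x$ to be moved past the measured-qubit location without changing the output.

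Once all decryption keys $P\in\mathbb{P}_m$ act coherently, the structure parallels Equation~\ref{eqn:attack-on-EPR-4}. Summing over $P$ and using the bijection $P\mapsto\tilde P$ that tracks $P$ through $C$, I will invoke Lemma~\ref{lem:Pauli-twirl} (the Pauli twirl) \emph{separately on each register} to kill the cross-terms $Q\neq Q'$ in the Pauli expansion $E_k=\sum_Q \alpha_Q Q$. The resulting state will be the convex combination
\begin{equation*}
\sum_Q |\alpha_Q|^2 \; Q\, C\,\bigl(\egoketbra{0^n}\tensor \rho_{\mathrm{aux}}\bigr)\, C^*\, Q^*,
\end{equation*}
where $\rho_{\mathrm{aux}}$ is the tensor product of the honest auxiliary inputs prescribed by Figure~\ref{fig:R-gate-EPR-computation} for each of the $t$ gadgets. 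Here Lemma~\ref{lem:error-t-gate} is essential: a non-benign $\xgate$ or $\ygate$ landing on a measured $\tgate$-gadget qubit is replaced, in the coherent picture, by a $\zgate^{\star}\pgate$ correction propagated to the output wire, which is itself just another Pauli that the twirl handles in the correct register.

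With the attack reduced to a convex combination of Paulis, I split the sum according to $B_{t,n}$ and $B'_{t,n}$. For $Q\in B_{t,n}$, every measured qubit (the $n^{\text{th}}$ output wire and one qubit per $\tgate$-gadget) is acted on by $I$ or $\zgate$; since every measurement is in the computational basis, $\zgate$ commutes with the measurement projectors and leaves the classical outcome distribution unchanged. Tracing out the non-measured registers and measuring the output qubit then yields exactly the honest distribution, so the conditional acceptance probability is precisely $p=\norm{(\ket{0}\bra{0}\tensor\id_{n-1})C\ket{0^n}}^2$. For $Q\in B'_{t,n}$, I will upper bound the conditional acceptance probability by the trivial $1$. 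Summing the two contributions produces exactly the bound in Equation~\ref{eqn:acceptance-computation-benign}.

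The main obstacle I expect is the bookkeeping: verifying that, when the coherent decryption keys are commuted through the entire circuit $C$, the induced map $P\mapsto\tilde P$ remains a bijection on $\mathbb{P}_m$ even in the presence of the adaptive $\pgate^x$ gates whose classical bits are exchanged mid-protocol. Interactive Proof System~\ref{prot:interactive-proof-EPR} resolves this: because the verifier's outgoing classical bits are themselves uniformly random (from measuring her half-EPR pairs), they can be absorbed into the twirl on their own register gadget-by-gadget, preserving the bijection structure. Once this is discharged, the remainder is a direct tensor-product generalization of the single-gadget calculation of Section~\ref{sec:soundness-T-gate-Paulis}.
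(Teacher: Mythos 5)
Your proposal follows essentially the same route as the paper: generalize the single-gadget calculation of Section~\ref{sec:soundness-T-gate-Paulis}, use the identity of Figure~\ref{fig:circuit-proof:5} to place the attack before the (coherently delayed) measurements and adaptive $\pgate^x$ corrections, apply the Pauli twirl register by register to kill the cross-terms, invoke Lemma~\ref{lem:error-t-gate} for the effect of a bit flip on a measured gadget qubit, and then bound benign terms by $p$ and non-benign terms by $1$. The overall structure and the final bound are right.

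One intermediate claim, however, is not correct as stated and should be weakened. After the twirl, the state is \emph{not} of the form $\sum_Q \abs{\alpha_Q}^2\, Q\, C\,(\egoketbra{0^n}\otimes\rho_{\mathrm{aux}})\, C^*\, Q^*$ with a single Pauli $Q$ acting after the full honest circuit, and the error $E=\zgate^{a\oplus c\oplus d\oplus x}\pgate$ induced by a non-benign attack on a measured gadget qubit is \emph{not} ``just another Pauli that the twirl handles'': $\pgate$ is not a Pauli, and the exponent of the accompanying $\zgate$ depends on the keys and on the gadget's measurement outcome $c$, so this error can neither be absorbed into the twirl nor commuted out to the end of the circuit as a Pauli on the output register. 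What the paper actually obtains after twirling is a convex combination in which, for each diagonal term, the factors $E^{\delta_{Q_i}}$ sit \emph{between} the Clifford stages of the decomposition $C=C_t\tgate\cdots\tgate C_1\tgate C_0$; for benign $Q$ all the $\delta$'s vanish and the honest computation survives (giving acceptance probability exactly $p$), while for non-benign $Q$ the corrupted circuit is simply bounded by $1$. Since you do apply the trivial bound to the non-benign terms at the end, your conclusion is unaffected, but the displayed form of the post-twirl state and the sentence about the twirl ``handling'' the $\zgate^{\star}\pgate$ correction need to be replaced by the weaker, correct statement above.
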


\begin{proof}
Let the notation be as in \expref{Lemma}{lem:acceptance-computation-induction}.
We apply \expref{Lemma}{lem:acceptance-computation-induction} to the case of the attack $\{E_k\}$. We denote each
\begin{equation}
  E_k = \sum_{Q\otimes R} \alpha_{k,Q \otimes R} Q\otimes R\,,
\end{equation}
where
\begin{equation}
  Q=P_1\otimes Q_1 \otimes P_2 \otimes Q_2 \otimes \cdots \otimes \P_t
  \otimes Q_t
\end{equation}
(as before) and $R \in \mathbb{P}_n$. By linearity, we obtain the
following system after the honest computation, attack, the verifier's
conditional corrections and tracing out of the auxiliary system%
\begin{multline}
\sum_{\substack{a_1,b_1, \ldots a_n, b_n \\ \in \{0,1\}}} \sum_k \sum_{R,R'} \sum_{Q} \alpha_{k,Q \otimes R} \alpha^*_{k,Q\otimes R'}
R  (\xgate^{a_1}\zgate^{b_1} \otimes \cdots \otimes \xgate^{a_n}\zgate^{b_n})\\[-1ex]
   E^{\delta_{P_t}}_{\ell_t}    \tgate_{\ell_t} C_t \ldots  E^{\delta_{P_2}}{\ell_2}    \tgate_{\ell_2} C_2  E^{\delta_{P_1}}_{\ell_1}  \tgate_{\ell_1} C_1 \ket{0^n}
   \bra{0^n} C_1^* \tgate^*_{\ell_1}   {E^*}^{\delta_{P_1}}_{\ell_1} C_2^*   \tgate^*_{\ell_2}   {E^*}^{\delta_{P_2}}_{\ell_2} \ldots\\[1ex]
   \ldots C_t^* \tgate^*_{\ell_t}   {E^*}^{\delta_{P_t}}_{\ell_t}
   (\zgate^{b_1} \xgate^{a_1}\otimes \cdots \zgate^{b_n} \xgate^{a_n})
 R' \otimes {}\\[1ex]
  \egoketbra{a_1, b_1, \ldots ,a_n, b_n}\,.
\end{multline}
We can assume that the verifier then decrypts the system; by the Pauli twirl (\expref{Lemma}{lem:Pauli-twirl}), the terms with $R \neq R'$ vanish, leaving as the computation registers%
\begin{multline}
 \sum_k \sum_{Q,R} \abs{\alpha_{k,Q\otimes R}}^2
R
   E^{\delta_{P_t}}_{\ell_t}    \tgate_{\ell_t} C_t \ldots  E^{\delta_{P_2}}{\ell_2}    \tgate_{\ell_2} C_2  E^{\delta_{P_1}}_{\ell_1}  \tgate_{\ell_1} C_1 \ket{0^n}
 \bra{0^n} C_1^* \tgate^*_{\ell_1}   {E^*}^{\delta_{P_1}}_{\ell_1} C_2^*   \tgate^*_{\ell_2}   {E^*}^{\delta_{P_2}}_{\ell_2} \ldots   C_t^* \tgate^*_{\ell_t}   {E^*}^{\delta_{P_t}}_{\ell_t}
   R\,.
\end{multline}
Denoting $R = R_1 \otimes \cdots \otimes R_n$ ($R_i \in \mathbb{P}_1$), we see that each term with $Q\otimes R$ being benign leads to the output bit having the same distribution as in the honest case
 (Because 
 for all $i$, benign attacks have $\delta_{P_i}=0$, and also $R_n \in \{I, Z\}$ will have no effect on the output qubit that is measured%
 .)
   In the honest case, $p$ is the probability of observing ``0'' (which leads to the verifier accepting). In the case of the Pauli $Q\otimes R$ being \emph{not} benign, we have no bound on the acceptance probability. Thus, with probability at least
\begin{equation}
  (1-p) \sum_k \sum_{Q \in  B_{t,n}} \abs{\alpha_{k, Q}}^2\,,
\end{equation}
the verifier rejects.
\end{proof}

\subsection{Proof of soundness}
\label{sec:soundness-proof}
In order to complete the proof of soundness, we combine \expref{Lemma}{lem:test-run-acceptance} and \expref{Lemma}{lem:acceptance-computation}.
Consider the \expref{Interactive Proof System}{prot:interactive-proof-EPR} for a circuit $C$ on $n$ qubits and with $t$ $\tgate$-gate gadgets, with attack $\{E_k\}$, where
\begin{equation}
  E_k = \sum_Q \alpha_{k,Q} Q\,.
\end{equation}
Let $B_{t,n}$ be the set of benign Pauli attacks, and $B'_{t,n}$ be
the set of non-benign Pauli attacks, and let $p$ be as given in
\expref{Lemma}{lem:acceptance-computation}. Then since a test run
occurs with probability ${2}/{3}$ and a computation run occurs
with probability ${1}/{3}$, the probability that the verifier
rejects is at least%
   \begin{equation}
   \frac{2}{3} \cdot \frac{1}{2}\sum_k\sum_{Q \in B'_{t,n}} \abs{\alpha_{k,Q}}^2 + \frac{1}{3} (1-p) \sum_k\sum_{Q \in B_{t,n}} \abs{\alpha_{k,Q}}^2.
   \end{equation}
 Suppose that the input corresponds to a $\emph{no}$ instance of \textsf{Q-CIRCUIT}. Hence, $1-p \geq {2}/{3}$ and the probability that the verifier rejects is at least ${2}/{9}$ since%
 \begin{align}
 &  \frac{1}{3} \sum_k \sum_{Q \in B'_{t,n}} \abs{\alpha_{k,Q}}^2 + \frac{1-p}{3}  \sum_k\sum_{Q \in B_{t,n}} \abs{\alpha_{k,Q}}^2 \label{Eqn:rej-prob} \\ &\geq
  \frac{1}{3} \sum_k \sum_{Q \in B'_{t,n}} \abs{\alpha_{k,Q}}^2 + \frac{2}{9}  \sum_k\sum_{Q \in B_{t,n}} \abs{\alpha_{k,Q}}^2 \\
&= \frac{2}{9}   \sum_k \sum_{Q\in \mathbb{P}_{2t+n}} \abs{\alpha_{k,Q}}^2 +  \frac{1}{9} \sum_k \sum_{Q \in B'_{t,n}} \abs{\alpha_{k,Q}}^2 \\
& \geq
 \frac{2}{9}\,.
  \end{align}
In \expref{Section}{sec:completeness}, we saw that, in the case that the prover is honest, the probability $c$ of acceptance of a \emph{yes}-instance of \textsf{Q-CIRCUIT} satisfies  $c\geq {8}/{9}$ (completeness). We have determined above that for every prover,  the probability $s$ of acceptance of a \emph{no}-instance satisfies $s\leq 1-{2}/{9} = {7}/{9}$ (soundness). Thus we have $c-s \geq {1}/{9}$, which, by standard amplification,  completes the proof of \expref{Theorem}{thm:main-QPIP}.

One consequence of the proof in this section is to note that the identity attack (\ie, the prover's honest  behaviour)  yields%
\begin{equation}
\sum_k \sum_{Q \in B'_{t,n}} \abs{\alpha_{k,Q}}^2 =0
\end{equation}
and
\begin{equation}
  \frac{1-p}{3}  \sum_k\sum_{Q \in B_{t,n}} \abs{\alpha_{k,Q}}^2 = \frac{1-p}{3}\,,
  \end{equation}
from which we conclude (\expeqref{Equation}{Eqn:rej-prob}) that the identity attack is an attack that minimizes the  probability of rejection of a \emph{no} instance.

\section*{Acknowledgements}
I am grateful to Urmila Mahadev for suggesting the relabelling described in \expref{Section}{section:resources-T-gate}, which simplifies the proof of soundness, and also for related discussions.
It is a pleasure to thank Gus Gutoski for many deep conversations, from which this work originated, and Thomas Vidick for many related discussions.
Furthermore, it is a pleasure to thank  Jacob Krich for supplying background material on quantum simulations. I am also grateful to  Harry Buhrman,  Evelyn Wainewright and to the anonymous referees for feedback on an earlier version of this work.

\bibliographystyle{tocplain}   %
\bibliography{v014a011}

\begin{tocauthors}
\begin{tocinfo}[broadbent]
 Anne Broadbent\\
Associate professor\\
 University of Ottawa\\
 Ottawa, ON, Canada\\
 abroadbe\tocat{}uottawa\tocdot{}ca \\
 \url{http://mysite.science.uottawa.ca/abroadbe/}
\end{tocinfo}
\end{tocauthors}

\begin{tocaboutauthors}
\begin{tocabout}[broadbent]
\textsc{Anne Broadbent} holds an undergraduate degree  (2002) in
\href{https://uwaterloo.ca/combinatorics-and-optimization/}{Combinatorics and Optimization} from the
\href{https://uwaterloo.ca/}{University of Waterloo}.
Her Master's (2004) and \phd\ (2008) work were supervised by Gilles Brassard and Alain Tapp at the
\href{http://diro.umontreal.ca/}{D\'epartement d'informatique et de recherche op\'erationnelle} of
\href{http://www.umontreal.ca/}{Universit\'e de Montr\'eal}.
In her spare time, she likes researching
\href{https://doi.org/10.1007/s10623-015-0157-4}{quantum cryptography beyond quantum key distribution}.
\end{tocabout}
\end{tocaboutauthors}

\end{document}